\newtheorem{lemma}{Lemma}
\newtheorem{remark}{Remark}
\newtheorem{theorem}{Theorem}
  \newtheorem{corollary}{Corollary}
  \newtheorem{proposition}{Proposition}
  \newtheorem{definition}{Definition}
\newfont{\hueca}{msbm10}
\def\hu #1{\hbox{\hueca #1}}\def\hu #1{\hbox{\hueca #1}}
\begin{document}
\date{}
\author{A.J. Calder\'{o}n, D.M. Cheikh}

\title{Poisson color algebras of arbitrary degree}

\maketitle
\begin{abstract}
A  Poisson algebra is a Lie algebra endowed with a  commutative
associative product in such a way that the Lie and associative
products are compatible via a Leibniz rule. If we part from a Lie
color
 algebra, instead of a Lie algebra,   a graded-commutative  associative product and a graded-version   Leibniz rule
 we get  a so-called  Poisson color algebra (of degree zero). This concept can be extended to any degree  so as to
 obtain the class of Poisson color algebras of arbitrary degree. This class turns out  to be a wide class of
algebras containing the ones of Lie color algebras (and so Lie
superalgebras and Lie algebras),  Poisson algebras, graded Poisson
algebras, $z$-Poisson algebras, Gerstenhaber algebras and Schouten
algebras  among others classes of algebras. The
  present paper is devoted to the study of
 the structure of Poisson color algebras of arbitrary degree,  with restrictions neither
  on the dimension nor the base field.

\end{abstract}

\medskip


\textbf{Key words:}  Poisson algebra,  Lie color  algebra,
Gerstenhaber algebra, Schouten algebra, graded algebra, structure
theory, simple component.

\section{Introduction}

On
 the one hand, we recall that Batalin-Vilkovisky (BV) formalism
was introduced in physics as a way of dealing with gauge theories,
being of special interest in the study of  path integrals in
quantum field theory. It can also be seen as a procedure for the
quantization of physical systems with symmetries in the Lagrangian
formalism (see \cite{Ca4, Ca13, Geo5}). BV formalism is just an
example of application of graded Poisson algebras of integer
degree. As another example, we note that it is possible to recover
Hamiltonian mechanics from the coordinate space of the theory by
making use of graded Poisson algebras (\cite{Cor}). We can
enumerate many more applications (see \cite{2,4,Geo2,Geo5,1}), but
we refer to \cite{Catta} to a good review on this matter.

\begin{definition}\rm\label{graded}
Let   ${\mathcal P} = \bigoplus\limits_{z \in {\mathbb Z}}
{\mathcal P}_z$ be a ${\mathbb Z}$-graded vector space   endowed
with a bilinear product $\{\cdot, \cdot\}$ such that
 $$\{{\mathcal P}_{z},{\mathcal P}_{z^{\prime}}\} \subset
 {\mathcal P}_{z+z^{\prime}+ z_0}$$ for any $z, z^{\prime} \in {\mathbb
 Z}$ and a fixed $z_0 \in {\mathbb Z}$, and satisfying  the  identities
  $$\hbox{$\{x,y\}=-(-1)^{(| x|+z_0)( | y|+z_0)}\{y,x\}$,}$$
and
  $$\hbox{$\{x,\{y,z\}\}=\{\{x,y\},z\}+(-1)^{(| x|+z_0)( | y|+z_0)}\{y,\{x,z\}\}$}$$
   for any homogeneous elements $x \in
 {\mathcal P}_{| x| }$, $y \in
 {\mathcal P}_{| y|}$ and $z \in
 {\mathcal P}_{| z|}$.
   ${\mathcal P}$ is called a {\it graded Poisson algebra of
 degree $z_0$}  if it is also  endowed with an associative product, denoted  by
 juxtaposition, such that
  $${\mathcal P}_{z_1}{\mathcal P}_{z_2} \subset
 {\mathcal P}_{z_1+z_2} $$ for any $z_1,z_2 \in {\mathbb Z}$,  and
 satisfies
$$ \hbox{$xy=(-1)^{| x| | y|}yx$}$$  and  $$\hbox{$\{x,yz\}=\{x,y\}z+(-1)^{(| x|+z_0)| y|}y\{x,z\}$ }$$
 for any $x \in
 {\mathcal P}_{| x|}$,  $y \in
 {\mathcal P}_{| y|}$ and $z \in
 {\mathcal P}_{| z|}$.
\end{definition}

\medskip

In the case $z_0=0$ we deal with  {\it even Poisson algebras}
while in the case $z_0=1$ we are dealing with {\it Gerstenhaber
algebras}.

\bigskip

 On the other hand, we also recall that Lie color algebras were introduced in \cite{1} as  a
generalization of Lie superalgebras and hence of Lie algebras.
Since then, this kind of algebras has been an object of constant
interest in
 mathematics,  (see  \cite{Dmitri,  Price, Zhang, Xueme, Kaiming} for recent references),
 being also valuable  the important role they
play in theoretical physics, especially in conformal field theory
and supersymmetries (\cite{C1,C2,JMP2,JMP1}).

\begin{definition}\rm
Let ${\hu K}$ be  an arbitrary  field    and fix an
 abelian  group $(G,+)$. A {\it
skew-symmetric bicharacter}  of   $G$ is a map $$\epsilon : G
\times G \longrightarrow {\hu K}\setminus \{0\}$$ satisfying
$$\epsilon(g_1,g_2)= \epsilon(g_2,g_1)^{-1},$$
$$\epsilon(g_1,g_2+g_3)=
\epsilon(g_1,g_2)\epsilon(g_1,g_3),$$ for any $g_1,g_2,g_3 \in G$.
\end{definition}


\begin{definition}\rm
Let $(G,+)$ be an abelian group,  $\epsilon$ a skew-symmetric
bicharacter  of   $G$ and $${\mathcal P} = \bigoplus\limits_{g \in
G} {\mathcal P}_g$$  a $G$-graded ${\hu K}$-vector space.  We
shall say that ${\mathcal P}$ is a {\it Lie color algebra} if it
is endowed with a bilinear product $\{\cdot, \cdot\}$ satisfying
$$\{{\mathcal P}_{g},{\mathcal P}_{h}\} \subset
 {\mathcal P}_{g+h}$$ for any $g,h \in G,$
 and such that
$$\{x,y\} = -\epsilon({|x| },|y|)\{y,x\}$$
 and
  $$\{x,\{y,z\}\} = \{\{x,y\},z\} + \epsilon(|x|,
  |y|)\{y,\{x,z\}\}$$
 for any homogeneous elements $x \in
 {\mathcal P}_{| x| }$, $y \in
 {\mathcal P}_{| y|}$ and $z \in
 {\mathcal P}_{| z|}$.
\end{definition}
Lie superalgebras (and so Lie algebras)  are examples of Lie color
algebras by considering  $G = {\hu Z}_2$ and $\epsilon(i,j) =
(-1)^{ij},$ for any $i,j \in {\hu Z}_2.$

\medskip

Now we have to note that
 another  class  of  Poisson-type  algebras similar to the one of graded Poisson algebras of degree $z_0$ in  Definition
\ref{graded} but replacing  the group ${\mathbb Z}$  by ${\mathbb
Z}_2$ has been considered in the literature. This kind of algebras
are known as {\it even and odd Poisson superalgebras}, depending
on taking degree $\bar{0}$ or degree $\bar{1}$, being  of interest
in studying, for instance,
  two-dimensional supergravity and three-dimensional systems
  (\cite{2,super,4,1}). However, as we know,  there is not a category  in the literature which allows
  us to combine a graded bracket of degree $g_0\in G$ and a graded commutative
  associative product via a graded Leibniz identity when the
  group $G$ is an arbitrary abelian group. In the present
  paper we will introduce such a  notion by starting from a degree $g_0$ generalization of a Lie color
  algebra.
\begin{definition}\rm\label{colorpo}
Let $(G,+)$ be an abelian group,  $\epsilon$ a skew-symmetric
bicharacter  of   $G$ and $${\mathcal P} = \bigoplus\limits_{g \in
G} {\mathcal P}_g$$  a $G$-graded ${\hu K}$-vector space   endowed
with a  bilinear product $\{\cdot, \cdot\}$ satisfying
$$\{{\mathcal P}_{g},{\mathcal P}_{h}\} \subset
 {\mathcal P}_{g+h+g_0}$$ for any $g,h \in G$ and a fixed $g_0 \in
 G$,
 and such that
 $$\{x,y\} = -\epsilon({|x| +g_0},|y|+g_0)\{y,x\}\hspace{4.4cm} ({\rm Anticonmutativity})$$
 and
  $$
\{x,\{y,z\}\} = \{\{x,y\},z\} + \epsilon(|x|+g_0,
|y|+g_0)\{y,\{x,z\}\}\hspace{0.85cm} ({\rm Jacobi \hspace{0.2cm}
Identity})$$
 for any homogeneous elements $x \in
 {\mathcal P}_{| x| }$, $y \in
 {\mathcal P}_{| y|}$ and $z \in
 {\mathcal P}_{| z|}$.
  It is said that ${\mathcal P}$ is a {\it  Poisson color algebra of
 degree $g_0$}, if it is also  endowed with an associative product, denoted  by
 yuxtaposition, such that
  $${\mathcal P}_{g}{\mathcal P}_{h} \subset
 {\mathcal P}_{g+h} $$ for any $g,h \in G$,  and
 satisfies
$$xy=\epsilon({|x| },|y|)yx \hspace{6.75cm} ({\rm Conmutativity})$$  and
 $$\{x,yz\}=\{x,y\}z+\epsilon({|x|+g_0 },|y|)y\{x,z\}\hspace{3.4cm}({\rm Leibniz \hspace{0.2cm}
Identity})$$
 for any $x \in
 {\mathcal P}_{| x|}$,  $y \in
 {\mathcal P}_{| y|}$ and $z \in
 {\mathcal P}_{| z|}$.
\end{definition}

\medskip

  This class of algebras turns out  to be a wide one  containing those  of Lie color algebras (and so Lie
superalgebras and Lie algebras),  Poisson algebras, graded Poisson
algebras, $z$-Poisson algebras, Gerstenhaber algebras
(\cite{G1,G2,G3}),  and Schouten algebras (\cite{S1,S2,F2}),
 among
other classes of algebras, being these classes of algebras of
increasing interest in mathematical physics, especially in Hamiltonian
and Lagrangian dynamics and mechanics.  Hence Poisson color
algebras of
 degree $g_0$ allow us to  treat all
of these classes of algebras from a common view point and extend
their formalisms to non-necessarily ${\mathbb Z}$-graded or
${\mathbb Z}_2$-graded contexts.  We also note that the case of
degree 0 has been previously considered in \cite{F1} for the
 case of Banach algebras, in the study of a color extension of
 Hamiltonian formalism. Also a geometric approach to the ideas of
 \cite{F1} can be found in \cite{F2}, where it is presented
 a Poisson geometry in this context.

 \medskip

 The usual regularity concepts will be understood in the graded
  sense. That is, a  {\it subalgebra}  of a Poisson color algebra ${\mathcal P}$
  of arbitrary
 degree
 is a graded linear subspace ${\mathcal Q}$
satisfying $\{{\mathcal Q},{\mathcal Q} \} + {\mathcal Q}{\mathcal
Q} \subset {\mathcal Q}$.   An {\it ideal } ${\mathcal I}$ of
${\mathcal P}$ is a subalgebra satisfying $\{{\mathcal
I},{\mathcal P} \} +\{{\mathcal P},{\mathcal I} \}+ {\mathcal
I}{\mathcal P}+ {\mathcal P}{\mathcal I}\subset {\mathcal I}$.
Finally,  ${\mathcal P}$ is called {\it simple} if $\{{\mathcal
P},{\mathcal P} \}\neq 0$, ${\mathcal P}{\mathcal P}\neq 0$ and
its only ideals are $\{0\}$ and ${\mathcal P}$.

\medskip

We are interested in
 the present paper in studying the structure  of Poisson color algebras ${\mathcal P}$ of arbitrary
 degree.
The paper is organized as follows.  In $\S 2$ we develop
techniques of  connections in the restricted support of ${\mathcal
P}$
      so as to
      show that  ${\mathcal P}$ is of the form ${{\mathcal P}}=U
+ \sum\limits_{j}{\mathcal I}_{j}$ with $U$ a linear subspace of
 ${{\mathcal P}}_0+{{\mathcal P}}_{g_0}+{{\mathcal
P}}_{-g_0}$ and any ${\mathcal I}_j$
     a well described   (graded) ideal of ${\mathcal P}$, satisfying
   $\{{\mathcal I}_j,{\mathcal I}_k\} + {\mathcal I}_j{\mathcal I}_k=0$ if $j\neq k$.
   In $\S 3$, and under mild
   conditions, the   simplicity of ${\mathcal P}$ is characterized and it is
   shown that any Poisson color algebra ${\mathcal P}$ of arbitrary
 degree   is the direct sum of the family of its minimal
   (graded) ideals, each one being a simple Poisson color algebra of
   the same
 degree.

\medskip

   Finally we note that, throughout this paper, Poisson color algebras of
 degree $g_0 \in G$ are considered of arbitrary dimension and over an arbitrary
base  field ${\hu K}$.

\section{Connections and gradings}

In the following,  $${\mathcal P}=\bigoplus\limits_{g \in
G}{{\mathcal P}}_{g}$$ denotes a Poisson color algebra of degree
$g_0$.
 We will write by $$\Sigma=\{g \in G : {\mathcal P}_g \neq 0\}\setminus \{0,\pm g_0\}$$
 the  {\it restricted  support} of ${\mathcal P}$  and by
$$-\Sigma=\{-g: g \in \Sigma\} \subset G \setminus \{0,\pm g_0\}.$$

\begin{definition}\rm \label{connection}
Let $g$ and $h$ be two elements in $\Sigma$. We shall say that $g$
is {\em connected} to  $h$ if there exist
$$\hbox{$g_{1},g_2,...,g_{n} \in \pm \Sigma \cup \{0,\pm g_0\}$ and
$k_2,k_3,...,k_{n} \in \{0,\pm g_0\}$}$$ such that:

\begin{enumerate}
\item[1.] $g_{1} =g,$


 \item[2.]   $g_{1} + g_{2} +k_2 \in \pm\Sigma,$\\
 $g_{1} + g_{2} +k_2+ g_{3} +k_3 \in \pm\Sigma,$\\
$g_{1} + g_{2} +k_2+ g_{3} +k_3+ g_{4}+k_4\in
\pm\Sigma,$\\º$\cdots \cdots \cdots $\\
$g_{1}+ g_{2}+k_2 + g_{3}+k_3+ \cdots +g_{n-1}+k_{n-1}\in
\pm\Sigma,$

\item[4.]  $g_{1}+ g_{2}+k_2 + g_{3}+k_3+ \cdots
+g_{n}+k_{n}=\epsilon h$ for some  $\epsilon \in \pm 1$.
\end{enumerate}

We shall also say that $$\{g_{1}\otimes 0,g_2\otimes k_2,
g_3\otimes k_3,...,g_{n}\otimes k_{n}\}$$ is a {\em connection}
from $g$ to $h$.
\end{definition}

\begin{proposition}\label{pro1}\label{equivalence}
The relation $\sim$ in $\Sigma$, defined by $g \sim h$ if and only
if  $g$ is connected to $h$ is an equivalence relation.
\end{proposition}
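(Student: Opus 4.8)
The plan is to verify the three defining properties of an equivalence relation — reflexivity, symmetry, and transitivity — for the connection relation $\sim$ on the restricted support $\Sigma$.

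\textbf{Reflexivity.} For any $g \in \Sigma$, I need a connection from $g$ to $g$. The natural candidate is the length-one "connection" $\{g \otimes 0\}$: with $n=1$, $g_1 = g$, there are no intermediate sums to check in condition 2, and condition 4 reads $g_1 = \epsilon g$ with $\epsilon = 1 \in \pm 1$. So $g \sim g$ trivially. (If the convention requires $n \ge 2$, one instead uses $\{g\otimes 0, \, 0\otimes 0\}$ — adding the zero element with $k_2 = 0$ — so that every partial sum equals $g \in \pm\Sigma$; this still works and shows reflexivity.)

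\textbf{Symmetry.} Suppose $g \sim h$ via a connection $\{g_1\otimes 0, g_2\otimes k_2, \dots, g_n\otimes k_n\}$, so all partial sums $s_i := g_1 + g_2 + k_2 + \cdots + g_i + k_i$ lie in $\pm\Sigma$ for $2 \le i \le n-1$, and $s_n = \epsilon h$. The idea is to reverse the chain. Define the reversed sequence, roughly $h_1 = \epsilon h$ (which is in $\pm\Sigma$ since $h \in \Sigma$, so $h_1 = \pm h$; note we may replace $h$ by $\epsilon h$ as representatives are taken up to sign, which is exactly why condition 4 allows the factor $\epsilon \in \pm 1$), then subtract back the terms $g_n + k_n$, $g_{n-1} + k_{n-1}$, etc. Concretely, take the connection from $h$ whose successive elements peel off $-(g_n\otimes k_n), -(g_{n-1}\otimes k_{n-1}), \dots$ so that its partial sums run backwards through $s_{n-1}, s_{n-2}, \dots, s_1 = g$, each of which is in $\pm\Sigma \cup \{0,\pm g_0\}$ by hypothesis, and land on $g$ up to a sign. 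One must only check that the building blocks of the reversed connection are again of the allowed form, i.e. lie in $\pm\Sigma \cup \{0, \pm g_0\}$ (for the $g_i$-type entries) and in $\{0,\pm g_0\}$ (for the $k_i$-type shifts); since negation preserves these sets — $-\Sigma$, $-g_0$, $0$ are all admissible — this goes through. Hence $h \sim g$.

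\textbf{Transitivity.} Suppose $g \sim h$ and $h \sim \ell$, with connections $\{g_1\otimes 0, \dots, g_n\otimes k_n\}$ (so $s_n = \epsilon h$) and $\{h_1\otimes 0, h_2 \otimes m_2, \dots, h_r\otimes m_r\}$ (so the partial sums reach $\epsilon' \ell$). The plan is to concatenate: follow the first connection from $g$ to $\pm h$, then splice in the second connection, but adjusted by the sign $\epsilon$ so the chain continues consistently. One replaces the start $h_1 \otimes 0$ of the second connection by a step that bridges from $s_n = \epsilon h$ into the $\pm\Sigma$-chain of the second connection — using that $\epsilon h_1 = \epsilon(\epsilon h) $-type matching and that multiplying an entire admissible chain by $\epsilon \in \pm 1$ keeps all partial sums in $\pm\Sigma$ and all shift terms in $\{0,\pm g_0\}$. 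After this sign adjustment the concatenated sequence has all its partial sums in $\pm\Sigma$ at the required positions and terminates at $\pm\ell$, giving $g \sim \ell$. \textbf{The main obstacle} is purely bookkeeping: matching the sign $\epsilon$ at the junction and confirming that the amalgamated list still satisfies condition 2 at the splice point (where the last partial sum of the first chain, $\epsilon h$, must agree with the first relevant partial sum of the sign-adjusted second chain). No genuine difficulty arises because the definition of connection is closed under negation of all entries and under the allowed insertions of $0$ and $\pm g_0$; the argument is the standard "reverse and concatenate paths" proof, and the product/bracket structure of $\mathcal P$ plays no role here — only the combinatorics of $\Sigma \subset G$.
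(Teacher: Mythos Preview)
Your proof is correct and follows essentially the same approach as the paper: the length-one connection $\{g\otimes 0\}$ for reflexivity, reversing with a sign for symmetry, and concatenating with the sign adjustment $\epsilon$ for transitivity. The only refinement is that the reversed connection must start at $h$ itself (not $\epsilon h$, since condition~1 requires $g_1=h$): the paper takes $\{h\otimes 0,\,-\epsilon g_n\otimes -\epsilon k_n,\ldots,-\epsilon g_2\otimes -\epsilon k_2\}$, whose intermediate partial sums are $\epsilon s_{n-1},\epsilon s_{n-2},\ldots\in\pm\Sigma$ and whose final sum is $\epsilon g$.
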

\begin{proof}
 The set $\{g \otimes 0\}$ is a connection from $g$ to itself and therefore $g \sim
g $.

 If $g \sim h$ and $\{g_{1}\otimes 0,g_2\otimes k_2,
g_3\otimes k_3,...,g_{n}\otimes k_{n}\}$ is a connection from $g$
to $h$, then it is straightforward  to verify that
$$\{h\otimes 0,-\epsilon g_{n}\otimes -\epsilon k_{n},-\epsilon g_{n-1}\otimes -\epsilon k_{n-1},...,
-\epsilon g_3\otimes -\epsilon k_3,-\epsilon g_2\otimes -\epsilon
k_2\}$$
 is a connection from $h$
to $g$ in case  $$g_{1}+ g_{2}+k_2 + g_{3}+k_3+ \cdots
+g_{n-1}+k_{n-1}+g_{n}+k_{n}= \epsilon h.$$ Therefore $h \sim g$.

 Finally, suppose  $g \sim h$ and
$h \sim l$, and
 write
\begin{equation}\label{conn}
\{g_{1}\otimes 0,g_2\otimes k_2, ...,g_{n}\otimes k_{n}\}
\end{equation}
   for a connection from
$g$ to $h$ and $\{h_1\otimes 0,h_2\otimes k^{\prime}_2,...,
h_{m}\otimes k^{\prime}_{m}\}$ for a connection from $h$ to $l$.

If $m=1$, then $l \in \{\pm h\}$ and so  the own  connection
(\ref{conn})  gives us  $g \sim l$.

If $m> 1$, then it is easy to check that
$$\{g_1\otimes 0,g_2\otimes k_2,..., g_{n}\otimes k_{n},\epsilon h_2 \otimes \epsilon
k^{\prime}_2,\epsilon h_3\otimes  \epsilon
k^{\prime}_3,...,\epsilon h_{m}\otimes \epsilon k^{\prime}_{m}\}$$
is a connection from $g$ to $l$ in case $g_{1}+g_2+k_2+
g_3+k_3+\cdots +g_{n}+k_{n}=\epsilon h.$ Therefore $g$ is
connected to $l$ and $\sim$ is an equivalence relation.
\end{proof}

By Proposition \ref{equivalence} the connection relation is an
equivalence relation in $\Sigma$ and so we can consider the
quotient set
$$\Sigma / \sim=\{[g]: g \in \Sigma\},$$
becoming  $[g]$  the set of elements in the restricted support  of
the grading  which are connected to $g$.

\smallskip

\begin{remark}\label{r1}\rm
Observe that for any $g \in \Sigma$, if $\epsilon g+ \mu g_0 \in
\Sigma$ for some $\epsilon \in \pm1$ and some $\mu \in \{0\} \cup
\{\pm 1\} \cup \{\pm 2\}$ then $$\epsilon g+ \mu g_0 \in [g].$$
Indeed, we just have to consider either the connection $\{g
\otimes 0, 0 \otimes \epsilon \mu g_0 \}$ when $\mu \in \{0\} \cup
\{\pm 1\}$, or $\{g \otimes 0, g_0 \otimes  g_0 \}$ when
$\mu=2\epsilon$, or $\{g \otimes 0, -g_0 \otimes  -g_0 \}$ when
$\mu=-2\epsilon$.
\end{remark}

\bigskip

Our final goal in this section is to associate an adequate
subalgebra ${\mathcal I}_{[g]}$ to any $[g] \in \Sigma / \sim$.

\medskip

Fix $g \in \Sigma$, we start by defining the following linear
subspaces. For any
$$\alpha \in \{0, g_0, -g_0\}$$ let us write
\[
{\mathcal P}_{\alpha,[g]}:=\] \[ \sum\limits_{\{h\in [g], p \in
\Sigma: \hspace{1mm}p=-h-g_0+\alpha\} }\{{\mathcal
P}_{h},{\mathcal P}_{p}\} + \sum\limits_{\{k \in [g], q \in
\Sigma\cup \{-g_0\}: \hspace{1mm} q=-k+\alpha  \}} {\mathcal
P}_{k}{\mathcal P}_{q}\subset {\mathcal P}_{\alpha}.
\]

Observe that whence  $h \in [g]$ and $-h-g_0+ \alpha \in \Sigma$,
(resp. $k \in [g]$ and $-k+\alpha \in \Sigma$),  then the
connection $\{h \otimes 0, g_0 \otimes -\alpha \}$, (resp,  $\{k
\otimes 0, -\alpha \otimes 0 \}$), together with the transitivity
of the connection relation, give us $-h-g_0+ \alpha \in [g]$,
 (resp. $-k+\alpha \in [g]$). Also observe that the possibility $q=-g_0$ just holds when $\alpha=g_0$
 and $k=2g_0 \in \Sigma$.
 Next we define
\[
{\mathcal V}_{[g]}:=\bigoplus_{h \in [g]}{\mathcal P}_{h}.
\]
Finally, we denote by ${\mathcal I}_{[g]}$ the direct sum
\[
{\mathcal I}_{[g]}:= (\sum\limits_{\alpha \in \{0,
g_0,-g_0\}}{\mathcal P}_{\alpha,[g]})\oplus {{\mathcal
V}}_{[g]}.\]

\medskip

\begin{lemma}\label{lemafri-1} For any $g \in \Sigma$ and $h,k \in [g]$ we have
 $\{ {\mathcal P}_{h},{\mathcal P}_{k}\}+ {\mathcal P}_{h}{\mathcal P}_{k}
\subset {\mathcal I}_{[g]}.$
\end{lemma}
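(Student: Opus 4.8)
The plan is to prove the two inclusions $\{{\mathcal P}_{h},{\mathcal P}_{k}\}\subset{\mathcal I}_{[g]}$ and ${\mathcal P}_{h}{\mathcal P}_{k}\subset{\mathcal I}_{[g]}$ separately, each time by a three-way case split on the degree of the output (which is $h+k+g_0$ for the bracket and $h+k$ for the associative product): either that homogeneous component is zero, or it lies in $\Sigma$, or it lies in $\{0,\pm g_0\}$. Since ${\mathcal I}_{[g]}=(\sum_{\alpha\in\{0,g_0,-g_0\}}{\mathcal P}_{\alpha,[g]})\oplus{\mathcal V}_{[g]}$ is a linear subspace, the two inclusions together give the statement.

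For the bracket, set $s:=h+k+g_0$, so $\{{\mathcal P}_{h},{\mathcal P}_{k}\}\subset{\mathcal P}_{s}$. If ${\mathcal P}_{s}=0$ there is nothing to prove. If $s\in\Sigma$, I would observe that $\{h\otimes 0,\,k\otimes g_0\}$ is a connection from $h$ to $s$ in the sense of Definition \ref{connection}: here $g_1=h$, $g_2=k$, $k_2=g_0$, the membership requirements are immediate since $h,k\in[g]\subset\Sigma$ and $g_0\in\{0,\pm g_0\}$, and the only substantive chain condition is $g_1+g_2+k_2=h+k+g_0=s\in\Sigma$, which holds by assumption. Hence $h\sim s$; since $h\in[g]$ and $\sim$ is an equivalence relation (Proposition \ref{equivalence}), this gives $s\in[g]$ and therefore $\{{\mathcal P}_{h},{\mathcal P}_{k}\}\subset{\mathcal P}_{s}\subset{\mathcal V}_{[g]}\subset{\mathcal I}_{[g]}$. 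If instead $s\in\{0,\pm g_0\}$, I would put $\alpha:=s$ and note that, taking $h$ in the role of the summation index running over $[g]$ and $p:=k\in\Sigma$, one has $p=-h-g_0+\alpha$; thus $\{{\mathcal P}_{h},{\mathcal P}_{k}\}$ is one of the summands in the definition of ${\mathcal P}_{\alpha,[g]}$, so $\{{\mathcal P}_{h},{\mathcal P}_{k}\}\subset{\mathcal P}_{\alpha,[g]}\subset{\mathcal I}_{[g]}$.

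For the associative product the argument is parallel, with $t:=h+k$ in place of $s$: if ${\mathcal P}_{t}=0$ we are done; if $t\in\Sigma$ then $\{h\otimes 0,\,k\otimes 0\}$ is a connection from $h$ to $t$, so $t\in[g]$ and ${\mathcal P}_{h}{\mathcal P}_{k}\subset{\mathcal V}_{[g]}\subset{\mathcal I}_{[g]}$; and if $t\in\{0,\pm g_0\}$ we set $\alpha:=t$ and use that, with $h$ as the index running over $[g]$ and $q:=k\in\Sigma\subset\Sigma\cup\{-g_0\}$, one has $q=-h+\alpha$, so that ${\mathcal P}_{h}{\mathcal P}_{k}$ is a summand of ${\mathcal P}_{\alpha,[g]}\subset{\mathcal I}_{[g]}$.

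I do not anticipate a genuine obstacle: the statement amounts to bookkeeping once the objects ${\mathcal P}_{\alpha,[g]}$, ${\mathcal V}_{[g]}$ and ${\mathcal I}_{[g]}$ are unwound. The two steps that need a little care are verifying that the length-two sequences above really satisfy every clause of Definition \ref{connection}, and, in the subcases where the output degree lands in $\{0,\pm g_0\}$, correctly matching the relevant bracket or product to a summand of ${\mathcal P}_{\alpha,[g]}$ — which is possible precisely because we are free to place an element of $[g]$ in the first slot, both $h$ and $k$ lying in $[g]$.
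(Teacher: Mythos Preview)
Your proof is correct and follows essentially the same approach as the paper: both split each product into the cases where the output degree lies in $\Sigma$ (handled via the length-two connections $\{h\otimes 0,\,k\otimes g_0\}$ and $\{h\otimes 0,\,k\otimes 0\}$) or in $\{0,\pm g_0\}$ (handled by matching to a summand of ${\mathcal P}_{\alpha,[g]}$). Your explicit treatment of the case ${\mathcal P}_s=0$ is merely a rephrasing of the paper's ``if the product is nonzero'' opening.
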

\begin{proof}
If $\{ {\mathcal P}_{h},{\mathcal P}_{k}\} \neq 0$ we have two
possibilities. In the first one $h+k+g_0 \in \{0, g_0,-g_0\}$  and
then $\{ {\mathcal P}_{h},{\mathcal P}_{k}\} \subset {\mathcal
P}_{h+k+g_0,[g]}$, or $h+k+g_0 \notin \{0, g_0,-g_0\}$ being then
$\{h\otimes 0, k\otimes g_0\}$ a connection from $h$ to $h+k+g_0
$, that is, $h+k+g_0  \in [g]$ and so  $\{ {\mathcal
P}_{h},{\mathcal P}_{k}\} \subset {\mathcal I}_{[g]}.$

If ${\mathcal P}_{h}{\mathcal P}_{k} \neq 0$, we also have two
cases to distinguish. In the first one $h+k \in  \{0, g_0,-g_0\}$
and so ${\mathcal P}_{h}{\mathcal P}_{k} \subset {\mathcal
P}_{h+k,[g]}$, while in the second one $h+k \notin  \{0,
g_0,-g_0\}$ and then the connection $\{h\otimes 0, k\otimes 0\}$
gives us $h$ is connected to  $h+k$ being   $h+k  \in [g]$.
Consequently   ${\mathcal P}_{h}{\mathcal P}_{k}  \subset
{\mathcal I}_{[g]}.$
\end{proof}

\begin{definition}\rm
For any $\alpha \in \{\pm ng_0: n \in 0,1,2,3\}$ it is said that
${\mathcal P}_{\alpha}$ is {\it tight} if
$${\mathcal P}_{\alpha}=$$ $$\sum\limits_{\{h,p \in \Sigma \setminus \{\pm
ng_0: n \in 2,3\}:p=-h-g_0+\alpha\}}\{{\mathcal P}_h, {\mathcal
P}_{p}\} + \sum\limits_{\{k,q\in \Sigma \setminus \{\pm ng_0: n
\in 2,3\}:q=-k+\alpha\}}{\mathcal P}_k {\mathcal P}_{q}.$$
\end{definition}

\begin{lemma}\label{tight}
If ${\mathcal P}_{g_0}$ is tight then the following assertions
hold.
\begin{enumerate}
\item[{\rm 1.}] If $-2g_0 \in \Sigma$ then $\{{\mathcal
P}_{-2g_0},{\mathcal P}_{g_0} \}  \subset {\mathcal
P}_{0,[-2g_0]}$ and ${\mathcal P}_{-2g_0}{\mathcal P}_{g_0}
\subset {\mathcal P}_{-g_0,[-2g_0]}$.

 \item[{\rm 2.}]  If $-3g_0 \in \Sigma$ then $\{{\mathcal P}_{-3g_0},{\mathcal P}_{g_0} \} \subset {\mathcal
P}_{-g_0,[-3g_0]}.$

\item[{\rm 3.}]  If $-2g_0,g \in \Sigma$ with  $\{{\mathcal
P}_{-2g_0},{\mathcal P}_{0,[g]} \} \neq 0$ and  $G$ is free of
2-torsion, then $[g]=[-2g_0]$ and
$$\{{\mathcal P}_{-2g_0},{\mathcal P}_{0,[g]} \} \subset {\mathcal
P}_{-g_0,[-2g_0]}.$$
\end{enumerate}
\end{lemma}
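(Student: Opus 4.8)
The plan is to prove the three assertions of Lemma \ref{tight} by using the tightness hypothesis on ${\mathcal P}_{g_0}$ to express any relevant product landing in ${\mathcal P}_{g_0}$ as a sum of brackets $\{{\mathcal P}_h,{\mathcal P}_p\}$ and products ${\mathcal P}_k{\mathcal P}_q$ with $h,p,k,q\in\Sigma$ (in particular none of them equal to $\pm 2g_0$ or $\pm 3g_0$), and then to ``push'' the outer operation with ${\mathcal P}_{-2g_0}$ or ${\mathcal P}_{-3g_0}$ through the Leibniz identity and the graded Jacobi identity so that every resulting term is recognizably a summand of the target $\mathcal P_{\alpha,[g]}$. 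Throughout I will keep careful track of $G$-degrees: $\{{\mathcal P}_a,{\mathcal P}_b\}\subset{\mathcal P}_{a+b+g_0}$ and ${\mathcal P}_a{\mathcal P}_b\subset{\mathcal P}_{a+b}$, so the arithmetic of indices determines which $\mathcal P_{\alpha,[g]}$ a term belongs to.

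For assertion 1, I would start with $\{{\mathcal P}_{-2g_0},{\mathcal P}_{g_0}\}$, which sits in ${\mathcal P}_0$. Writing ${\mathcal P}_{g_0}=\sum\{{\mathcal P}_h,{\mathcal P}_p\}+\sum{\mathcal P}_k{\mathcal P}_q$ with $h,p,k,q\in\Sigma\setminus\{\pm 2g_0,\pm 3g_0\}$ and $p=-h,\ q=-k+g_0$, I apply the graded Jacobi identity to $\{{\mathcal P}_{-2g_0},\{{\mathcal P}_h,{\mathcal P}_p\}\}$ and the Leibniz identity to $\{{\mathcal P}_{-2g_0},{\mathcal P}_k{\mathcal P}_q\}$. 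In the Jacobi case the two resulting terms are $\{\{{\mathcal P}_{-2g_0},{\mathcal P}_h\},{\mathcal P}_p\}\subset\{{\mathcal P}_{-h-g_0},{\mathcal P}_h\}$ and $\{{\mathcal P}_h,\{{\mathcal P}_{-2g_0},{\mathcal P}_p\}\}\subset\{{\mathcal P}_h,{\mathcal P}_{-h-g_0}\}$; here the key point is that $-h-g_0$ (resp.\ the inner factor's degree) lies in $[-2g_0]$, which follows from Remark \ref{r1} together with the connection $\{-2g_0\otimes 0,h\otimes g_0\}$ (or its analogue) and transitivity, so each term is a summand defining ${\mathcal P}_{0,[-2g_0]}$. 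One must also check the degenerate subcases where an inner degree like $-h-g_0$ or $-2g_0+h$ falls into $\{0,\pm g_0\}$, or equals $\pm 2g_0$ or $\pm 3g_0$; in those cases the tightness of ${\mathcal P}_{g_0}$ excludes some of them, and for the survivors one argues directly that the resulting subspace still lies in ${\mathcal P}_{0,[-2g_0]}$. The Leibniz case for ${\mathcal P}_{-2g_0}{\mathcal P}_{g_0}$ is handled identically but landing in ${\mathcal P}_{-g_0,[-2g_0]}$, using $q=-k+g_0$ and the connection certifying $-k\in[-2g_0]$.

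Assertion 2 is the same mechanism applied to $\{{\mathcal P}_{-3g_0},{\mathcal P}_{g_0}\}\subset{\mathcal P}_{-g_0}$, expanding ${\mathcal P}_{g_0}$ by tightness and pushing the bracket through Jacobi/Leibniz; every term lands in a subspace of the shape $\{{\mathcal P}_h,{\mathcal P}_{-h-2g_0}\}$ or ${\mathcal P}_k{\mathcal P}_{-k-g_0}$ and one verifies via Remark \ref{r1} and a short connection that the relevant indices are connected to $-3g_0$, so the whole thing lies in ${\mathcal P}_{-g_0,[-3g_0]}$. Assertion 3 is the subtlest: from $\{{\mathcal P}_{-2g_0},{\mathcal P}_{0,[g]}\}\neq 0$, using that ${\mathcal P}_{0,[g]}$ is itself built from brackets $\{{\mathcal P}_h,{\mathcal P}_{-h-g_0}\}$ and products ${\mathcal P}_k{\mathcal P}_{-k}$ with $h,k\in[g]$, I expand and push $\{{\mathcal P}_{-2g_0},-\}$ through; the nonvanishing of some resulting term forces a relation of the form $-2g_0+(\text{something in }[g])\in\pm\Sigma\cup\{0,\pm g_0\}$, which supplies a connection between $-2g_0$ and an element of $[g]$. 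The freeness of $G$ from $2$-torsion is used precisely to rule out collapses such as $2g_0=0$ or $-2g_0=2g_0$ that would break the index bookkeeping and allow spurious identifications; with that hypothesis in hand, transitivity of $\sim$ yields $[g]=[-2g_0]$ and then the containment in ${\mathcal P}_{-g_0,[-2g_0]}$ follows from the computation just performed.

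The main obstacle I anticipate is the combinatorial case analysis on $G$-degrees in assertions 1 and 2: when pushing $\{{\mathcal P}_{-ng_0},-\}$ through Jacobi/Leibniz, the inner and outer factor degrees can a priori land in the ``excluded'' set $\{0,\pm g_0,\pm 2g_0,\pm 3g_0\}$ rather than in $\pm\Sigma$, and one must show that either tightness forbids that configuration or that the term still lands in the claimed ${\mathcal P}_{\alpha,[\cdot]}$ by a direct connection argument. Producing the right connections (and invoking Remark \ref{r1} with the correct choice of $\epsilon$ and $\mu$) in each subcase, and checking the $2$-torsion-freeness is genuinely needed exactly where claimed in assertion 3, is where the real work lies; the algebraic identities themselves are routine.
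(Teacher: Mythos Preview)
Your plan is essentially the paper's proof: expand ${\mathcal P}_{g_0}$ by tightness, push the outer bracket/product through via Jacobi, Leibniz, anticommutativity and associativity, and then use short connections (together with Remark~\ref{r1}) to place each resulting index in the correct class, with a case analysis for the degenerate indices. Two small points to flag for execution: in your Jacobi expansion for part~1 the first term should read $\{{\mathcal P}_{h-g_0},{\mathcal P}_{-h}\}$ rather than $\{{\mathcal P}_{-h-g_0},{\mathcal P}_{h}\}$, and in part~3 the paper handles the edge cases $h=2g_0$ and $k=-2g_0$ by invoking the already-proved items~1 and~2 (this is also where the $2$-torsion-freeness is actually used, to force $-3g_0\in\Sigma$ when needed), so expect to call back to them rather than redo the argument.
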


\begin{proof}
1. Let us begin by showing $\{{\mathcal P}_{-2g_0},{\mathcal
P}_{g_0} \}  \subset {\mathcal P}_{0,[-2g_0]}$. By Jacobi
identity, Leibniz identity and anticommutativity
\begin{equation}\label{gui1}
\{{\mathcal P}_{-2g_0},{\mathcal P}_{g_0} \} \subset$$
$$\sum\limits_{h\in \Sigma \setminus \{\pm ng_0: n \in 2,3\}}
\{{\mathcal P}_{-2g_0},\{{\mathcal P}_h, {\mathcal P}_{-h}\}  \}+
\sum\limits_{k,-k+g_0\in \Sigma \setminus \{\pm ng_0: n \in 2,3\}}
\{{\mathcal P}_{-2g_0}, {\mathcal P}_k {\mathcal P}_{-k+g_0}\}
$$ $$\subset \sum\limits_{h\in \Sigma \setminus \{\pm ng_0: n \in 2,3\}}(\{{\mathcal P}_{h-g_0},{\mathcal P}_{-h } \} +
\{{\mathcal P}_{-h - g_0},{\mathcal P}_{h} \})+$$ $$
\sum\limits_{k,-k+g_0\in \Sigma \setminus \{\pm ng_0: n \in
2,3\}}({\mathcal P}_{k-g_0}{\mathcal P}_{-k+ g_0 } + {\mathcal
P}_{k}\{{\mathcal P}_{-2g_0},{\mathcal P}_{-k+g_0}\}).
\end{equation}
Since for any $p\in \Sigma$ such that  $\epsilon p + \nu g_0
 \notin \{0, g_0,-g_0\}$, where  $\epsilon, \nu \in \{\pm 1\}$, we
 have
 $\epsilon p + \nu g_0 \in \Sigma$ in case ${\mathcal P}_{\epsilon p + \nu g_0} \neq 0$,  and the connection
$\{-2g_0 \otimes 0, \epsilon p \otimes g_0\}$ gives us that in
case ${\mathcal P}_{\epsilon p - g_0} \neq 0$ then $\epsilon p -
g_0 \in [-2g_0]$ for any $\epsilon \in \{ \pm 1\}$, we get that
any
\begin{equation}\label{gui2}
  \{{\mathcal
P}_{h-g_0},{\mathcal P}_{-h } \} + \{{\mathcal P}_{-h -
g_0},{\mathcal P}_{h} \}+ {\mathcal P}_{k-g_0}{\mathcal P}_{-k+
g_0 } \subset {\mathcal P}_{0,[-2g_0]}.
\end{equation}
 Finally, observe that if
$\{{\mathcal P}_{-2g_0},{\mathcal P}_{-k+g_0}\} \neq 0$ then
$\{-2g_0\otimes 0, (-k+g_0) \otimes g_0\}$ is a connection from
$-2g_0$ to $k$ and so
\begin{equation}\label{gui3}
 {\mathcal P}_{k}\{{\mathcal
P}_{-2g_0},{\mathcal P}_{-k+g_0}\} \subset {\mathcal
P}_{k}{\mathcal P}_{-k} \subset {\mathcal P}_{0,[-2g_0]}.
\end{equation}
From Equations (\ref{gui1}), (\ref{gui2}) and (\ref{gui3}) we
complete the assertion.

Let us now prove that ${\mathcal P}_{-2g_0}{\mathcal P}_{g_0}
\subset {\mathcal P}_{-g_0,[-2g_0]}$. By  Leibniz identity and
associativity we get

$${\mathcal P}_{-2g_0}{\mathcal P}_{g_0} \subset$$ $$ \sum\limits_{h\in
\Sigma \setminus \{\pm ng_0: n \in 2,3\}} {\mathcal
P}_{-2g_0}\{{\mathcal P}_h, {\mathcal P}_{-h}\}   +
\sum\limits_{k,-k+g_0\in \Sigma \setminus \{\pm ng_0: n \in 2,3\}}
{\mathcal P}_{-2g_0} ({\mathcal P}_k {\mathcal P}_{-k+g_0})
\subset$$ $$ \sum\limits_{h\in \Sigma \setminus \{\pm ng_0: n \in
2,3\}} (\{{\mathcal P}_{h},{\mathcal P}_{-h-2g_0 } \} + {\mathcal
P}_{h - g_0}{\mathcal P}_{-h} )+ \sum\limits_{k,-k+g_0\in \Sigma
\setminus \{\pm ng_0: n \in 2,3\}} {\mathcal P}_{k-2g_0}{\mathcal
P}_{-k+ g_0 }. $$ Now observe that for any $p \in \Sigma$ such
that $\epsilon p-2g_0 \in \Sigma$, where $\epsilon \in \{\pm 1\}$,
the connection $\{\epsilon p-2g_0\otimes 0, -\epsilon p \otimes
0\}$ gives us $\epsilon p-2g_0 \in [-2g_0]$. From here, in case
$-p+ g_0 \in \Sigma$, we get $\{{\mathcal P}_{p},{\mathcal
P}_{-p-2g_0 } \} + {\mathcal P}_{p-2g_0}{\mathcal P}_{-p+ g_0 }
\subset {\mathcal P}_{-g_0,[-2g_0]}$. Finally, in case ${p - g_0}
\in \Sigma$ we have seen above that $p - g_0 \in [-2g_0]$ and then
${\mathcal P}_{p - g_0}{\mathcal P}_{-p}\subset {\mathcal
P}_{-g_0,[-2g_0]}$ which completes this case.

2. Analogous to the first part of  item 1.

3. We have $$0\neq \{{\mathcal P}_{-2g_0},{\mathcal P}_{0,[g]} \}
\subset  \{{\mathcal P}_{-2g_0},\sum\limits_{\{h\in [g]: -h-g_0
\in \Sigma \}}\{{\mathcal P}_h, {\mathcal P}_{-h-g_0}\}
+\sum\limits_{k\in [g]} {\mathcal P}_k {\mathcal P}_{-k} \}
\subset$$ $$ \sum\limits_{\{h\in [g]: -h-g_0 \in \Sigma
\}}(\{{\mathcal P}_{h-g_0}, {\mathcal P}_{-h-g_0}\}+ \{{\mathcal
P}_{-h-2g_0}, {\mathcal P}_{h}\})+\sum\limits_{k\in [g]}({\mathcal
P}_{k-g_0} {\mathcal P}_{-k}+ {\mathcal P}_{k} {\mathcal
P}_{-k-g_0}).$$  If $\{{\mathcal P}_{h-g_0}, {\mathcal
P}_{-h-g_0}\} \neq 0$ for some $h \in [g]$ with $ -h-g_0 \in
\Sigma $, then the connection $\{h \otimes 0,g_0 \otimes 0, -h
\otimes g_0\}$ gives us $[h]=[-2g_0]$ and so $[g]=[-2g_0]$. Since
$-h-g_0 \in [h]=[-2g_0]$ we also have $ \{{\mathcal P}_{h-g_0},
{\mathcal P}_{-h-g_0}\} \subset {\mathcal P}_{-g_0,[-2g_0]}$ when
$h-g_0 \notin \{0, g_0, -g_0\}$, that is, when $h \neq 2g_0.$ In
case $h=2g_0$ we are dealing with the product $ 0\neq \{{\mathcal
P}_{g_0}, {\mathcal P}_{-3g_0}\}$, but the facts $G$ is free of
2-torsion and $2g_0 \in \Sigma$ show $-3g_0 \in \Sigma$ and so by
item 2 we get $ 0\neq \{{\mathcal P}_{g_0}, {\mathcal P}_{-3g_0}\}
\subset {\mathcal P}_{-g_0,[-3g_0]}= {\mathcal P}_{-g_0,[-2g_0]}$,
where last equality is consequence of  Remark \ref{r1}.

If ${\mathcal P}_{k} {\mathcal P}_{-k-g_0} \neq 0$ for some $k\in
[g]$ we have as in the previous case that if $-k-g_0 \in \Sigma$
then $[g]=[k]=[-2g_0]$,  and consequently ${\mathcal P}_{k}
{\mathcal P}_{-k-g_0} \subset {\mathcal P}_{-g_0,[-2g_0]}$. If
$-k-g_0 \in \{0, g_0,-g_0\}$, then $k=-2g_0$ and so
$[g]=[k]=[-2g_0]$ and we are dealing with the product $0 \neq
{\mathcal P}_{-2g_0} {\mathcal P}_{g_0}.$ But by Item 1, $0 \neq
{\mathcal P}_{-2g_0} {\mathcal P}_{g_0} \subset {\mathcal
P}_{-g_0,[-2g_0]}$.

If  $\{{\mathcal P}_{-h-2g_0}, {\mathcal P}_{h}\} \neq 0$ then in
case $-h-2g_0 \in \Sigma$ we get $[g]=[h]=[-h-2g_0]$ by Remark
\ref{r1}. We observe that  $-h-2g_0 \in \Sigma$. Indeed, in the
opposite case $-h-2g_0 \in \{0, g_0, -g_0\}$ and so $h=-2g_0$, but
then $-h-2g_0=0$ being $-h-2g_0 \in \Sigma,$ a contradiction. From
here, the connection $\{-h-2g_0 \otimes 0, h \otimes 0\}$ shows
$[g]=[h]=[-2g_0]$ and consequently  $0\neq \{{\mathcal
P}_{-h-2g_0}, {\mathcal P}_{h}\} \subset {\mathcal
P}_{-g_0,[-2g_0]}$.

Finally, if ${\mathcal P}_{k-g_0} {\mathcal P}_{-k}\neq 0$, we
have that in case $k-g_0 \in \Sigma$ then $[k-g_0]=[k]=[g]$ by
Remark \ref{r1} and that the connection  $\{k-g_0 \otimes 0, -k
\otimes -g_0\}$ shows $[k-g_0]=[-2g_0]$. Consequently $[g]=[2g_0]$
and $0 \neq {\mathcal P}_{k-g_0} {\mathcal P}_{-k} \subset
{\mathcal P}_{-g_0,[-2g_0]}$. If $k-g_0 \notin \Sigma$ then
$k=2g_0$ and we are dealing with the product $0 \neq {\mathcal
P}_{g_0} {\mathcal P}_{-2g_0}$ which is contained in ${\mathcal
P}_{-g_0,[-2g_0]}$ by Item 1.
\end{proof}

\begin{lemma}\label{dub1} Suppose  $G$ is free of 2-torsion, then for any   $\alpha, \beta  \in \{0, g_0,-g_0\}$ the following assertions hold.
\begin{enumerate}
\item[{\rm 1.}] For   each $h \in \Sigma$ satisfying
$-h-g_0+\alpha \in \Sigma$ we have
\begin{enumerate}
\item[{\rm 1.1.}] if $\alpha + \beta + g_0 \in \{0, g_0,-g_0\}$
then $h+\beta + g_0\in \Sigma$ in case ${\mathcal P}_{h+\beta +
g_0} \neq 0$, and $ -h+\alpha + \beta \in \Sigma$ in case
${\mathcal P}_{-h+\alpha + \beta} \neq 0$.

\item[{\rm 1.2.}] if    $\alpha + \beta  \in \{0, g_0,-g_0\}$ and
${\mathcal P}_{h+\beta+ g_0} \neq 0$ then
 $h+\beta+ g_0 \in
\Sigma \cup \{-g_0\}$.

\item[{\rm 1.3.}] if    $\alpha + \beta  \in \{0, g_0,-g_0\}$ and
$(\alpha, \beta, h) \notin \{ (g_0,-g_0, -2g_0), (0,-g_0, -3g_0)
\}$ then $-h- g_0 + \alpha + \beta  \in \Sigma$ in case ${\mathcal
P}_{-h- g_0 + \alpha + \beta} \neq 0$.

\end{enumerate}

\item[{\rm 2.}] For   each $k \in \Sigma$ satisfying $-k+\alpha
\in \Sigma \cup \{-g_0\}$ we have

\begin{enumerate}

\item[{\rm 2.1.}] if $\alpha + \beta + g_0 \in \{0, g_0,-g_0\}$
and ${\mathcal P}_{k+\beta+g_0} \neq 0$,  then $k+\beta+g_0 \in
\Sigma  \cup \{-g_0\}$.

\item[{\rm 2.2.}] if $\alpha + \beta + g_0 \in \{0, g_0,-g_0\}$
and ${\mathcal P}_{-k+\alpha} \neq 0$,  then $-k+\alpha \in
\Sigma$.

\item[{\rm 2.3.}] if    $\alpha + \beta  \in \{0, g_0,-g_0\}$ and
${\mathcal P}_{-k+  \alpha + \beta} \neq 0$,  then $-k+  \alpha +
\beta  \in \Sigma \cup \{-g_0\}$.
\end{enumerate}
\end{enumerate}
\end{lemma}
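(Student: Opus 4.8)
The plan is to prove Lemma~\ref{dub1} by a case analysis driven entirely by the definition of the restricted support $\Sigma$ and the definition of connection. The recurring mechanism is the following elementary observation: if $x\in\Sigma$ (so ${\mathcal P}_x\neq 0$) and $\epsilon\in\{\pm1\}$, $\mu\in\{0,\pm1,\pm2\}$, and if moreover ${\mathcal P}_{\epsilon x+\mu g_0}\neq 0$, then either $\epsilon x+\mu g_0\in\{0,\pm g_0\}$ (a ``degenerate'' case we must exclude or handle separately), or $\epsilon x+\mu g_0\in\Sigma$; and in the latter situation Remark~\ref{r1} gives $\epsilon x+\mu g_0\in[x]$. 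So the whole lemma is about ruling out, in each of the enumerated subcases, the finitely many ways the relevant element can land in $\{0,\pm g_0\}$, using the standing hypotheses ($h,k\in\Sigma$, $-h-g_0+\alpha\in\Sigma$ or $-k+\alpha\in\Sigma\cup\{-g_0\}$, the sign constraint on $\alpha+\beta$ or $\alpha+\beta+g_0$, $G$ free of $2$-torsion, and in~1.3 the explicit exclusion of two triples).

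First I would set up the general bookkeeping: a homogeneous degree $\gamma$ fails to lie in $\Sigma$ precisely when either ${\mathcal P}_\gamma=0$ or $\gamma\in\{0,\pm g_0\}$; since in every item we assume the relevant space is nonzero, ``$\gamma\notin\Sigma$'' reduces to ``$\gamma\in\{0,\pm g_0\}$''. Then for item~1 I would argue as follows. In~1.1 the hypothesis $\alpha+\beta+g_0\in\{0,\pm g_0\}$ together with $\alpha,\beta\in\{0,\pm g_0\}$ and $G$ being $2$-torsion-free pins down the possible pairs $(\alpha,\beta)$; for each such pair I check that $h+\beta+g_0\in\{0,\pm g_0\}$ would force $h\in\{0,\pm g_0,\pm2g_0\}$, and that the cases $h\in\{0,\pm g_0\}$ contradict $h\in\Sigma$ while the cases $h=\pm2g_0$ are killed by combining $h\in\Sigma$ with $-h-g_0+\alpha\in\Sigma$ and $2$-torsion-freeness (e.g. $h=2g_0$ and $\alpha\in\{0,\pm g_0\}$ makes $-h-g_0+\alpha\in\{-3g_0,-2g_0,-4g_0\}$, and one checks these are not simultaneously consistent with the sign constraint), and symmetrically for $-h+\alpha+\beta$. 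Items~1.2 and~1.3 are the same computation but now under $\alpha+\beta\in\{0,\pm g_0\}$; in~1.2 the conclusion is weaker ($\Sigma\cup\{-g_0\}$ rather than $\Sigma$), which is exactly why no triple needs to be excluded there, whereas in~1.3 the stronger conclusion forces us to exclude the two triples $(g_0,-g_0,-2g_0)$ and $(0,-g_0,-3g_0)$, for which $-h-g_0+\alpha+\beta$ genuinely equals $-g_0$ — I would display those two instances explicitly to show the exclusion is sharp and then verify that in all other $(\alpha,\beta,h)$ the value $-h-g_0+\alpha+\beta\in\{0,\pm g_0\}$ contradicts one of the running hypotheses. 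Item~2 is entirely parallel, with the extra wrinkle that $-k+\alpha$ is allowed to be $-g_0$ from the start (this is the ``$q=-g_0$'' possibility noted after the definition of ${\mathcal P}_{\alpha,[g]}$, occurring only when $k=2g_0$, $\alpha=g_0$), so in~2.1 and~2.3 the target conclusion is likewise $\Sigma\cup\{-g_0\}$, and I would carry the value $k=2g_0$ along as an explicit sub-branch.

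The main obstacle will not be any single step but the sheer combinatorial branching: each subitem has two or three admissible sign pairs $(\alpha,\beta)$, and within each pair several ways for the target element to be one of $0,g_0,-g_0$, so I expect the bulk of the work to be a disciplined table verifying, line by line, that every ``bad'' landing contradicts $h\in\Sigma$, or $k\in\Sigma$, or the auxiliary membership $-h-g_0+\alpha\in\Sigma$ resp. $-k+\alpha\in\Sigma\cup\{-g_0\}$, or $2$-torsion-freeness. The one genuinely delicate point is keeping straight where the conclusion is ``$\in\Sigma$'' versus ``$\in\Sigma\cup\{-g_0\}$'' and, correspondingly, which configurations must be excluded by hand (the two triples in~1.3) versus absorbed into the weaker conclusion elsewhere; I would treat those borderline cases first and in full detail, then dispatch the remaining cases by the uniform argument above.
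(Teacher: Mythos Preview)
Your approach is essentially the paper's own: the authors also argue by contradiction, assuming the target element lands in $\{0,g_0,-g_0\}$, extracting the forced value of $h$ (or $k$), and then showing this clashes with $h\in\Sigma$, with $-h-g_0+\alpha\in\Sigma$, with the sign constraint on $\alpha+\beta+g_0$ (or $\alpha+\beta$), and finally with $2$-torsion-freeness; they write out only the sub-case $h+\beta+g_0=0$ of item~1.1 in full and declare the remaining items analogous. One small slip in your sketch: in item~1.1 the bad landings $h+\beta+g_0\in\{0,\pm g_0\}$ with $\beta\in\{0,\pm g_0\}$ force $h\in\{-2g_0,-3g_0\}$ (after discarding $h\in\{0,\pm g_0\}$), never $h=2g_0$, so your illustrative parenthetical should be adjusted, but the mechanism you describe is exactly right.
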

\begin{proof}

1.1. Suppose ${\mathcal P}_{h+\beta + g_0} \neq 0$ and $h+\beta +
g_0 \notin \Sigma$ being then $h+\beta + g_0 \in \{0, g_0,
-g_0\}.$

 If $h+\beta + g_0=0$, as $\beta \in \{0,g_0,-g_0\}$ and
$h \in \Sigma$, then necessarily $ \beta=g_0$ and $h= -2g_0$ with
\begin{equation}\label{dub2}
2g_0 \notin \{0, g_0,-g_0\}.
\end{equation}
Since $-h-g_0+\alpha \in \Sigma$ then  $g_0+\alpha \in \Sigma$
and, taking into account $\alpha \in \{0,g_0,-g_0\}$,  we get
$\alpha=g_0$. But we also know $\alpha + \beta + g_0 \in \{0,
g_0,-g_0\}$ and so $3g_0 \in \{0, g_0,-g_0\}$. This implies either
$2g_0 \in \{-g_0, 0 \}$ or $4g_0=0$. In the first case we have a
contradiction with Equation (\ref{dub2}) while in the second one
$0\neq 2g_0$ is an element of $G$ with 2-torsion which is also a
contradiction. A similar argument gives us that the case $h+\beta
+ g_0 \in \pm g_0$ does not hold and so $h+\beta + g_0 \in
\Sigma$. We  can also show as above that $ -h+\alpha + \beta \in
\Sigma$ in case ${\mathcal P}_{-h+\alpha + \beta} \neq 0$.

\smallskip

The remaining items can be proved by arguing as in Item 1.1.
\end{proof}

\begin{lemma}\label{lemafri1}
Suppose  $G$ is free of 2-torsion, then for any $g \in \Sigma$ and
$\alpha, \beta \in \{0, g_0,-g_0\}$ we have
\begin{enumerate}
\item[{\rm 1.}] $\{ {\mathcal P}_{\alpha,[g]},{\mathcal
P}_{\beta,[g]}\} \subset {\mathcal I}_{[g]}.$

\item[{\rm 2.}] If furthermore ${\mathcal P}_{g_0}$ is tight then
${\mathcal P}_{\alpha,[g]}{\mathcal P}_{\beta,[g]}\subset
{\mathcal I}_{[g]}.$
\end{enumerate}
\end{lemma}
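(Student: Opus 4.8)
The plan is to prove both assertions by the same bracket‑ and product‑expansion technique that appears in Lemma~\ref{tight}: expand each factor ${\mathcal P}_{\alpha,[g]}$ according to its defining formula as a sum of brackets $\{{\mathcal P}_h,{\mathcal P}_p\}$ with $h\in[g]$, $p=-h-g_0+\alpha\in\Sigma$, and products ${\mathcal P}_k{\mathcal P}_q$ with $k\in[g]$, $q=-k+\alpha\in\Sigma\cup\{-g_0\}$, and similarly for ${\mathcal P}_{\beta,[g]}$. Then I distribute the outer bracket (in item~1) or the outer associative product (in item~2) over these summands using Jacobi, Leibniz and associativity, rewriting everything in terms of products and brackets of homogeneous components of $\mathcal P$. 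Each resulting term is of one of the two forms $\{{\mathcal P}_a,{\mathcal P}_b\}$ or ${\mathcal P}_a{\mathcal P}_b$, and I must show it lands in ${\mathcal I}_{[g]}$. For that I will apply Lemma~\ref{lemafri-1}: it suffices to exhibit one of $a,b$ in $[g]$ (then the term is absorbed, since $[g]$ is closed under adding the other coordinate via a connection, or else the product already sits in some ${\mathcal P}_{\cdot,[g]}$).

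First I would treat item~1. After expanding $\{\{{\mathcal P}_h,{\mathcal P}_p\},\{{\mathcal P}_{h'},{\mathcal P}_{p'}\}\}$ and the mixed and product‑product combinations by Jacobi/Leibniz, a typical term is $\{\{{\mathcal P}_h,{\mathcal P}_{h'}\},{\mathcal P}_{p'}\}$, $\{{\mathcal P}_h,\{{\mathcal P}_{h'},{\mathcal P}_{p'}\}\}$, $\{{\mathcal P}_h,{\mathcal P}_{h'}\}{\mathcal P}_{p'}$, etc. In each such term at least one appearing index lies in $[g]$ — here is where Lemma~\ref{dub1} does the real work: it guarantees that the intermediate ``shifted'' indices such as $h+\beta+g_0$, $-h+\alpha+\beta$, $-h-g_0+\alpha+\beta$, $k+\beta+g_0$, $-k+\alpha+\beta$ either lie in $\Sigma$ (hence are legitimate supports, and the relevant connection $\{h\otimes 0,\dots\}$ or Remark~\ref{r1} places them in $[g]$) or fall into the exceptional set $\{0,\pm g_0\}$ or $\{-g_0\}$, in which case the term sits directly inside some ${\mathcal P}_{\gamma,[g]}$ with $\gamma\in\{0,g_0,-g_0\}$ by its very definition. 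Collecting the cases, every term lies in ${\mathcal I}_{[g]}$, proving~1.

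For item~2 the scheme is identical but with the outer operation the associative product and with one extra ingredient: the hypothesis that ${\mathcal P}_{g_0}$ is tight. Expanding ${\mathcal P}_{\alpha,[g]}{\mathcal P}_{\beta,[g]}$ produces terms like $(\{{\mathcal P}_h,{\mathcal P}_p\}){\mathcal P}_{k'}$, $({\mathcal P}_k{\mathcal P}_q){\mathcal P}_{k'}$, and so on; associativity and the Leibniz identity (to move brackets past products) reduce these to brackets/products of $\mathcal P$‑components with at least one index in $[g]$, handled by Lemmas~\ref{lemafri-1} and \ref{dub1} as before. The role of tightness is to control the genuinely degenerate configurations in which an intermediate index equals $g_0$ together with a factor in ${\mathcal P}_{-2g_0}$ or ${\mathcal P}_{-3g_0}$ — exactly the situations singled out in Lemma~\ref{tight}, items~1--3, which tell us those products still land in ${\mathcal P}_{-g_0,[-2g_0]}$ and that the class forced is $[-2g_0]$, consistent with $[g]$. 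So invoking Lemma~\ref{tight} at those spots closes the argument.

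The main obstacle, as in the preceding lemmas, is bookkeeping: one must enumerate the (fairly numerous) combinations of $(\alpha,\beta)\in\{0,g_0,-g_0\}^2$ and of which summand type each factor contributes, and in each branch verify that the ``shift'' indices behave as Lemma~\ref{dub1} prescribes and that the exceptional triples $(\alpha,\beta,h)\in\{(g_0,-g_0,-2g_0),(0,-g_0,-3g_0)\}$ are caught by Lemma~\ref{tight}. I expect no new idea is needed beyond what Lemmas~\ref{lemafri-1}, \ref{tight} and \ref{dub1} already supply; the proof is a disciplined case analysis, and I would organize it by first disposing of the ``generic'' terms (one index provably in $\Sigma$, hence in $[g]$) and then separately listing the finitely many degenerate ones and citing the tightness lemma for each.
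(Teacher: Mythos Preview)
Your strategy---expand ${\mathcal P}_{\alpha,[g]}$, apply Jacobi/Leibniz, use Lemma~\ref{dub1} to certify that the shifted indices lie in $\Sigma$ (hence in $[g]$ via a short connection), and invoke Lemma~\ref{tight} for the degenerate configurations in item~2---is exactly the paper's. The one real difference is that you propose to expand \emph{both} factors ${\mathcal P}_{\alpha,[g]}$ and ${\mathcal P}_{\beta,[g]}$, whereas the paper expands only the first and treats the second merely as a subset of ${\mathcal P}_\beta$. That simplification is why a single Jacobi (resp.\ Leibniz) step already yields terms of the shape $\{{\mathcal P}_{h+\beta+g_0},{\mathcal P}_{-h-g_0+\alpha}\}+\{{\mathcal P}_h,{\mathcal P}_{-h+\alpha+\beta}\}$---precisely the shifted indices $h+\beta+g_0$, $-h+\alpha+\beta$, $-h-g_0+\alpha+\beta$, $k+\beta+g_0$, $-k+\alpha+\beta$ that you correctly list and that Lemma~\ref{dub1} is tailor-made for. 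By contrast, your ``typical terms'' such as $\{\{{\mathcal P}_h,{\mathcal P}_{h'}\},{\mathcal P}_{p'}\}$ do not arise from one Jacobi application to $\{\{{\mathcal P}_h,{\mathcal P}_p\},\{{\mathcal P}_{h'},{\mathcal P}_{p'}\}\}$; pursuing the double expansion forces nested triple brackets and an inflated case split (and risks needing facts like $\{{\mathcal P}_{\gamma,[g]},{\mathcal P}_k\}\subset{\mathcal I}_{[g]}$ for $k\in[g]$, which is Lemma~\ref{lemafri101} and logically comes \emph{after} the present lemma). So: same idea, but keep the second factor unexpanded and the bookkeeping collapses to exactly the two-case dichotomy $\alpha+\beta+g_0\in\{0,\pm g_0\}$ versus $\alpha+\beta+g_0\notin\{0,\pm g_0\}$ (resp.\ $\alpha+\beta$ for item~2) that the paper runs.
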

\begin{proof}

 1.  
 Suppose there exists $h\in [g]$ with
$-h-g_0+\alpha \in \Sigma$ such that
\begin{equation}\label{fri1}
0 \neq \{\{{\mathcal P}_{h},{\mathcal P}_{-h-g_0+\alpha}\},
{\mathcal P}_{\beta,[g]} \}\subset {\mathcal P}_{\alpha + \beta +
g_0},
\end{equation}
 or there is  $k \in [g]$ with  $-k+\alpha \in \Sigma \cup
\{-g_0\}$
  satisfying
\begin{equation}\label{fri2}
0 \neq \{{\mathcal P}_{k}{\mathcal P}_{-k+\alpha}, {\mathcal
P}_{\beta,[g]} \}\subset {\mathcal P}_{\alpha + \beta + g_0}.
\end{equation}

Let us study Equation (\ref{fri1}), by applying
 Jacobi identity and   anticommutativity  we get
    $$0 \neq \{\{{\mathcal P}_{h},{\mathcal P}_{-h-g_0+\alpha}\}, {\mathcal P}_{\beta} \} \subset$$
    \begin{equation}\label{fri3}
 \{{\mathcal P}_{h+\beta+g_0},{\mathcal P}_{-h-g_0+\alpha}\}+
\{{\mathcal P}_{h},{\mathcal P}_{-h+\alpha+\beta}\} .
\end{equation}
We are going to  distinguish two possibilities, in the first one
$$\alpha+ \beta +g_0 \in \{0, g_0,-g_0\}.$$
From Equation (\ref{fri3}), either $\{{\mathcal
P}_{h+\beta+g_0},{\mathcal P}_{-h-g_0+\alpha}\} \neq 0$ or
$\{{\mathcal P}_{h},{\mathcal P}_{-h+\alpha+\beta}\} \neq 0$. In
the first case,  Lemma \ref{dub1}-1.1 gives us  $h+\beta+g_0
 \in \Sigma.$ Hence,
 taking into account that ${\mathcal P}_{\beta,[g]} \subset
{\mathcal P}_{\beta}$, that the connection $\{h\otimes 0, \beta
\otimes g_0\}$   shows
 $h+\beta+g_0 \in [h]=[g]$, and that $-h-g_0+\alpha \in [g]$  we
 obtain
$ 0 \neq \{{\mathcal P}_{h+\beta+g_0},{\mathcal
P}_{-h-g_0+\alpha}\}\subset {\mathcal P}_{\alpha+\beta+g_0,[g]}.$
In a similar way we have that in case $\{{\mathcal
P}_{h},{\mathcal P}_{-h+\alpha+\beta}\} \neq 0$ then $0 \neq
\{{\mathcal P}_{h},{\mathcal P}_{-h+\alpha+\beta}\} \subset
{\mathcal P}_{\alpha+\beta+g_0,[g]}$ and so we can assert
\begin{equation}\label{porhacer0}
 \{\{{\mathcal P}_{h},{\mathcal P}_{-h-g_0+\alpha}\},
{\mathcal P}_{\beta,[g]} \}\subset {\mathcal
P}_{\alpha+\beta+g_0,[g]}.
\end{equation}
In the second possibility $$\alpha+ \beta +g_0 \notin \{0,
g_0,-g_0\}.$$ We also have from Equation (\ref{fri3}) that either
$\{{\mathcal P}_{h+\beta+g_0},{\mathcal P}_{-h-g_0+\alpha}\}\neq
0$  or $ \{{\mathcal P}_{h},{\mathcal P}_{-h+\alpha+\beta}\} \neq
0$. In the first case, since $-h-g_0+\alpha \in \Sigma$,  the
connection $$\{ h\otimes 0, g_0 \otimes -\alpha, (-h-\beta-g_0)
\otimes -g_0 \}$$ gives us $\alpha+ \beta +g_0 \in [h]=[g]$ while
in the second one the connection $$\{ h\otimes 0, (-h+\alpha
+\beta) \otimes g_0 \}$$ gives us also $\alpha+ \beta +g_0 \in
[h]=[g].$ We have shown $ \{\{{\mathcal P}_{h},{\mathcal
P}_{-h-g_0+\alpha}\}, {\mathcal P}_{\beta,[g]} \}\subset {\mathcal
V}_{[g]}$ in this case and taking also into account Equation
(\ref{porhacer0}) that
$$ \{\{{\mathcal P}_{h},{\mathcal P}_{-h-g_0+\alpha}\},
{\mathcal P}_{\beta,[g]} \}\subset {\mathcal I}_{[g]}.$$

From  Leibniz identity and anticommutativity  we can study
Equation (\ref{fri2}) in a similar way to the above study of
Equation (\ref{fri1}), taking now into account Lemma
\ref{dub1}-2.1. and 2.2. to get
$$ \{{\mathcal P}_{k}{\mathcal P}_{-k+\alpha},
{\mathcal P}_{\beta,[g]} \}\subset {\mathcal I}_{[g]}. $$ and so
we conclude
 $$  \{ {\mathcal P}_{\alpha,[g]},{\mathcal
P}_{\beta,[g]}\}
 \subset {\mathcal I}_{[g]}.$$

2. Suppose  there exists $h\in [g]$ satisfying $-h-g_0+\alpha \in
\Sigma$ and  such that
 $$0 \neq \{{\mathcal P}_{h},{\mathcal
P}_{-h-g_0+\alpha}\} {\mathcal P}_{\beta,[g]} \subset {\mathcal
P}_{\alpha + \beta },$$
 or there is  $k \in [g]$ with  $-k+\alpha \in \Sigma \cup
\{-g_0\}$ such that $$0 \neq ({\mathcal P}_{k}{\mathcal
P}_{-k+\alpha}) {\mathcal P}_{\beta,[g]} \subset {\mathcal
P}_{\alpha + \beta }.$$ An analogous argument to item 1, taking
now into account Lemma \ref{dub1}-1.2., 1.3. and 2.3., and also in
the first possibility that the fact ${\mathcal P}_{g_0}$ tight
together with Remark \ref{r1} imply that in case $-2g_0 \in [g]$
then $\{{\mathcal P}_{-2g_0}, {\mathcal P}_{g_0}\} \in {\mathcal
P}_{0,[g]}$ and that in case $-3g_0 \in [g]$ then $\{{\mathcal
P}_{-3g_0}, {\mathcal P}_{g_0}\} \in {\mathcal P}_{-g_0,[g]}$,
gives us
$$ \{{\mathcal P}_{h},{\mathcal
P}_{-h-g_0+\alpha}\} {\mathcal P}_{\beta,[g]}+ ({\mathcal
P}_{k}{\mathcal P}_{-k+\alpha}) {\mathcal P}_{\beta,[g]} \subset
{\mathcal I}_{[g]} $$ and so $${\mathcal P}_{\alpha,[g]}{\mathcal
P}_{\beta,[g]}\subset {\mathcal I}_{[g]}.$$
\end{proof}

\begin{lemma}\label{lemafri101}
Suppose  ${\mathcal P}_{g_0}$ is tight and $G$ is free of
2-torsion, then for any $g \in \Sigma$, $\alpha \in \{0,
g_0,-g_0\}$ and $k \in [g]$ we have
$$\{ {\mathcal P}_{\alpha,[g]},{\mathcal P}_{k}\} + {\mathcal
P}_{\alpha,[g]}{\mathcal P}_{k} \subset {\mathcal I}_{[g]}.$$
\end{lemma}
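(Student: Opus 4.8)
The plan is to treat the bracket term $\{{\mathcal P}_{\alpha,[g]},{\mathcal P}_k\}$ and the associative term ${\mathcal P}_{\alpha,[g]}{\mathcal P}_k$ separately, in each case reducing to a single homogeneous component of ${\mathcal P}$ and then splitting according to whether that component is indexed by an element of $\Sigma$ or of $\{0,g_0,-g_0\}$. We may assume $g_0\neq 0$, the case $g_0=0$ being immediate. Since ${\mathcal P}_{\alpha,[g]}\subset{\mathcal P}_\alpha$ we have $\{{\mathcal P}_{\alpha,[g]},{\mathcal P}_k\}\subset{\mathcal P}_{\alpha+k+g_0}$. If $\alpha+k+g_0\in\Sigma$, the connection $\{k\otimes 0, g_0\otimes\alpha\}$ shows $\alpha+k+g_0\in[k]=[g]$, so this component lies in ${\mathcal V}_{[g]}\subset{\mathcal I}_{[g]}$ and we are done. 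If instead $\alpha+k+g_0\in\{0,g_0,-g_0\}$, then (discarding the trivial possibility that the component vanishes, and using $k\notin\{0,\pm g_0\}$ together with $2g_0\neq 0$, which follows from freeness of $2$-torsion) a short check shows that only the pairs $(\alpha,k)\in\{(g_0,-2g_0),(0,-2g_0),(g_0,-3g_0)\}$ can occur. Since $k\in[g]$, these force $[g]=[-2g_0]$ or $[g]=[-3g_0]$, and one invokes the appropriate item of Lemma \ref{tight}: item~1 for $\{{\mathcal P}_{g_0},{\mathcal P}_{-2g_0}\}\subset{\mathcal P}_{0,[-2g_0]}$; item~3 for $\{{\mathcal P}_{-2g_0},{\mathcal P}_{0,[g]}\}\subset{\mathcal P}_{-g_0,[-2g_0]}$ (the middle factor ${\mathcal P}_{0,[g]}$ must here be kept intact in the bracket, which is exactly what the case $\alpha=0$ permits); and item~2 for $\{{\mathcal P}_{-3g_0},{\mathcal P}_{g_0}\}\subset{\mathcal P}_{-g_0,[-3g_0]}$. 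After identifying $[-2g_0]$, respectively $[-3g_0]$, with $[g]$ and using anticommutativity and ${\mathcal P}_{\alpha,[g]}\subset{\mathcal P}_\alpha$, each of these lands in the summand ${\mathcal P}_{\alpha+k+g_0,[g]}\subset{\mathcal I}_{[g]}$.

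For the associative term, ${\mathcal P}_{\alpha,[g]}{\mathcal P}_k\subset{\mathcal P}_{\alpha+k}$. If $\alpha+k\in\Sigma$, the connection $\{k\otimes 0, 0\otimes\alpha\}$ gives $\alpha+k\in[g]$ and we are again inside ${\mathcal V}_{[g]}$. If $\alpha+k\in\{0,g_0,-g_0\}$, the same bookkeeping leaves only $(\alpha,k)=(g_0,-2g_0)$ and $(\alpha,k)=(-g_0,2g_0)$. The first follows at once from the product statement of Lemma \ref{tight}-1, namely ${\mathcal P}_{-2g_0}{\mathcal P}_{g_0}\subset{\mathcal P}_{-g_0,[-2g_0]}$, together with commutativity and $[-2g_0]=[g]$. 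The second, which amounts to ${\mathcal P}_{-g_0,[g]}{\mathcal P}_{2g_0}\subset{\mathcal P}_{g_0,[g]}$ when $2g_0\in[g]$, is not a direct instance of Lemma \ref{tight} and is the delicate point. Here I would expand ${\mathcal P}_{-g_0,[g]}$ into its defining generators $\{{\mathcal P}_h,{\mathcal P}_{-h-2g_0}\}$ (with $h\in[g]$, $-h-2g_0\in\Sigma$) and ${\mathcal P}_m{\mathcal P}_{-m-g_0}$ (with $m\in[g]$, $-m-g_0\in\Sigma$), and push the factor ${\mathcal P}_{2g_0}$ inward via the Leibniz identity and associativity, using $\{{\mathcal P}_h,{\mathcal P}_{-h-2g_0}\}{\mathcal P}_{2g_0}\subset\{{\mathcal P}_h,{\mathcal P}_{-h-2g_0}{\mathcal P}_{2g_0}\}+{\mathcal P}_{-h-2g_0}\{{\mathcal P}_h,{\mathcal P}_{2g_0}\}$ and $({\mathcal P}_m{\mathcal P}_{-m-g_0}){\mathcal P}_{2g_0}={\mathcal P}_m({\mathcal P}_{-m-g_0}{\mathcal P}_{2g_0})\subset{\mathcal P}_m{\mathcal P}_{-m+g_0}$. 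The inner degrees $-h$, $h+3g_0$ and $-m+g_0$ are then handled exactly as in the first paragraph: when they lie in $\Sigma$ they are connected to $g$ (for $-h$ trivially, for $-m+g_0$ by Remark \ref{r1}, for $h+3g_0$ by the connection $\{h\otimes 0, 2g_0\otimes g_0\}$, which is admissible because $2g_0\in\Sigma$), so Lemma \ref{lemafri-1} applies; the subcases $h\in\{-2g_0,-3g_0\}$ do not occur, since they contradict $-h-2g_0\in\Sigma$; and the only residual degenerate subcases, $h=-4g_0$ and $m=2g_0$, produce terms contained in ${\mathcal P}_{2g_0}{\mathcal P}_{-g_0}$, which is literally one of the defining generators of ${\mathcal P}_{g_0,[g]}$ (the one with $q=-g_0$, legal precisely because $\alpha=g_0$ and $2g_0\in[g]$).

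The step I expect to be the main obstacle is exactly this last product subcase ${\mathcal P}_{-g_0,[g]}{\mathcal P}_{2g_0}$: unlike every other degenerate configuration it is not an instance of Lemma \ref{tight}, and the naive manoeuvre of pushing the outer bracket (or product) inward via Jacobi/Leibniz is circular, because one re-meets an expression $\{{\mathcal P}_{\gamma,[g]},{\mathcal P}_{h'}\}$ with $\gamma\in\{0,\pm g_0\}$, which is the very statement being proved. What makes the argument close is that after the Leibniz/associativity expansion every inner degree one encounters is either in $\Sigma$ --- whence Lemma \ref{lemafri-1} disposes of it --- or forces the surviving factor ${\mathcal P}_{2g_0}$ to multiply a component of degree $-g_0$, and ${\mathcal P}_{2g_0}{\mathcal P}_{-g_0}$ is already a generator of ${\mathcal P}_{g_0,[g]}$; no bracket against a space ${\mathcal P}_{\gamma,[g]}$ is ever needed. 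Finally one should note that when $g_0$ has small order (for example $3g_0=0$, or $4g_0=0$ with $2g_0\neq 0$) several of the enumerated degenerate pairs collapse into $\{0,\pm g_0\}$ and are therefore vacuous, so they need no separate treatment; freeness of $2$-torsion is used exactly where Lemma \ref{tight}-3 requires it and through the exclusion $2g_0\neq 0$.
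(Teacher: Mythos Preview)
Your argument is correct and structurally identical to the paper's: same case split on whether $\alpha+k+g_0$ (respectively $\alpha+k$) lies in $\Sigma$ or in $\{0,\pm g_0\}$, same connections, same enumeration of the degenerate pairs, and the same appeals to Lemma~\ref{tight} for $(g_0,-2g_0)$, $(0,-2g_0)$, $(g_0,-3g_0)$ in the bracket case and $(g_0,-2g_0)$ in the product case.

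The only substantive difference is your treatment of $(\alpha,k)=(-g_0,2g_0)$. You flag this as ``the delicate point'' and carry out a full expansion of ${\mathcal P}_{-g_0,[g]}$ into generators, pushing ${\mathcal P}_{2g_0}$ inward via Leibniz and associativity, then chasing down the resulting degrees. This works, but the paper disposes of it in one line: since ${\mathcal P}_{-g_0,[g]}\subset{\mathcal P}_{-g_0}$, commutativity gives
\[
{\mathcal P}_{-g_0,[g]}{\mathcal P}_{2g_0}\subset{\mathcal P}_{-g_0}{\mathcal P}_{2g_0}={\mathcal P}_{2g_0}{\mathcal P}_{-g_0},
\]
and ${\mathcal P}_{2g_0}{\mathcal P}_{-g_0}$ is literally a defining generator of ${\mathcal P}_{g_0,[2g_0]}={\mathcal P}_{g_0,[g]}$ (the $q=-g_0$ clause, available precisely because $2g_0\in[g]$). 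You actually make this very observation at the end of your degenerate-subcase analysis, but only for the residual terms; you did not notice that the same observation, applied at the outset, makes the entire expansion unnecessary. So your route is correct but longer than needed for this one subcase.
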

\begin{proof}
Suppose $\{ {\mathcal P}_{\alpha,[g]},{\mathcal P}_{k}\} \neq 0$.
We have two cases to distinguish. In the first one $\alpha +k+g_0
\notin \{0, g_0,-g_0\}$ and so $\alpha +k+g_0 \in \Sigma$. Then we
have that the connection $\{k\otimes 0, \alpha \otimes g_0\}$
gives us $\alpha +k+g_0 \in [k]=[g]$ and so $$\{ {\mathcal
P}_{\alpha,[g]},{\mathcal P}_{k}\} \subset {\mathcal V}_{[g]}.$$
In the second case $\alpha +k+g_0 \in \{0, g_0,-g_0\}.$ Taking
also into account $\alpha \in \{0, g_0,-g_0\}$ and $k \notin \{0,
g_0,-g_0\}$ we have that $$(\alpha,k) \in
\{(g_0,-2g_0),(0,-2g_0),(g_0,-3g_0) \}.$$ Consider the possibility
$(\alpha,k) =(g_0,-2g_0)$, that is, $$0\neq \{ {\mathcal
P}_{g_0,[g]},{\mathcal P}_{-2g_0}\} \subset {\mathcal P}_{0}$$
being  $-2g_0 =k\in \Sigma$. From here,  $[-2g_0]=[g]$ and by
Lemma \ref{tight}-1  we get $$0\neq \{ {\mathcal
P}_{g_0,[g]},{\mathcal P}_{-2g_0}\} \subset \{ {\mathcal
P}_{g_0},{\mathcal P}_{-2g_0}\} \subset {\mathcal
P}_{0,[-2g_0]}={\mathcal P}_{0,[g_0]}.$$

In a similar way we can show $0\neq \{ {\mathcal
P}_{\alpha,[g]},{\mathcal P}_{k}\} \subset {\mathcal
P}_{\alpha+g_0+k,[g]}$ when $(\alpha,k) \in
\{(0,-2g_0),(g_0,-3g_0) \}$ and we conclude
$$\{ {\mathcal P}_{\alpha,[g]},{\mathcal P}_{k}\} \subset
{\mathcal I}_{[g]}.$$

\medskip

Suppose now $ {\mathcal P}_{\alpha,[g]}{\mathcal P}_{k} \neq 0$.
If $\alpha+ k \notin \{0, g_0,-g_0\}$ then the connection
$\{k\otimes 0, \alpha \otimes 0\}$ gives us $$0\neq  {\mathcal
P}_{\alpha,[g]}{\mathcal P}_{k} \subset {\mathcal P}_{k+\alpha}
\subset {\mathcal V}_{[g]}.$$  If $\alpha+ k \in \{0, g_0,-g_0\}$
then $(\alpha,k) \in \{(g_0, -2g_0), (-g_0, 2g_0)\}$. If
$(\alpha,k) =(g_0, -2g_0)$ then $0\neq  {\mathcal
P}_{g_0,[g]}{\mathcal P}_{-2g_0} \subset  {\mathcal
P}_{g_0}{\mathcal P}_{-2g_0} \subset {\mathcal P}_{-g_0,[g]}$,
  last inclusion  being consequence of Lemma \ref{tight}-1  and $[-2g_0]=[g]$. Finally, if $(\alpha,k) =(-g_0, 2g_0)$ then $0\neq  {\mathcal
P}_{-g_0,[g]}{\mathcal P}_{2g_0} \subset  {\mathcal
P}_{-g_0}{\mathcal P}_{2g_0} =  {\mathcal P}_{2g_0}{\mathcal
P}_{-g_0} \subset {\mathcal P}_{-g_0,[2g_0]}={\mathcal
P}_{-g_0,[g]}.$ We have shown $$ {\mathcal
P}_{\alpha,[g]}{\mathcal P}_{k} \subset {\mathcal I}_{[g]}.$$
\end{proof}

\begin{proposition}\label{lemasubalge}
Suppose  ${\mathcal P}_{g_0}$ is tight and $G$ is free of
2-torsion, then   for any $g \in \Sigma$ the graded linear
subspace ${\mathcal I}_{[g]}$ is a subalgebra  of ${\mathcal P}$.
\end{proposition}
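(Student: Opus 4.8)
The plan is to show that $\mathcal{I}_{[g]}$ is closed under both the bracket $\{\cdot,\cdot\}$ and the associative product, by decomposing an arbitrary product of two homogeneous pieces of $\mathcal{I}_{[g]}$ according to which of the two summands $\bigoplus_{\alpha}\mathcal{P}_{\alpha,[g]}$ and $\mathcal{V}_{[g]}=\bigoplus_{h\in[g]}\mathcal{P}_h$ the factors come from. Since $\mathcal{I}_{[g]}=(\sum_{\alpha\in\{0,g_0,-g_0\}}\mathcal{P}_{\alpha,[g]})\oplus\mathcal{V}_{[g]}$, bilinearity reduces everything to four types of products: (i) $\mathcal{V}_{[g]}$ with $\mathcal{V}_{[g]}$, (ii) $\mathcal{P}_{\alpha,[g]}$ with $\mathcal{V}_{[g]}$, (iii) $\mathcal{V}_{[g]}$ with $\mathcal{P}_{\beta,[g]}$, and (iv) $\mathcal{P}_{\alpha,[g]}$ with $\mathcal{P}_{\beta,[g]}$, for $\alpha,\beta\in\{0,g_0,-g_0\}$.

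Each of these four cases is already handled by a lemma proved above. First I would invoke Lemma~\ref{lemafri-1} to dispatch case (i): for any $h,k\in[g]$, both $\{\mathcal{P}_h,\mathcal{P}_k\}$ and $\mathcal{P}_h\mathcal{P}_k$ lie in $\mathcal{I}_{[g]}$, hence $\{\mathcal{V}_{[g]},\mathcal{V}_{[g]}\}+\mathcal{V}_{[g]}\mathcal{V}_{[g]}\subset\mathcal{I}_{[g]}$. For cases (ii) and (iii), anticommutativity of the bracket and commutativity of the associative product mean it suffices to treat one of them; Lemma~\ref{lemafri101} gives exactly $\{\mathcal{P}_{\alpha,[g]},\mathcal{P}_k\}+\mathcal{P}_{\alpha,[g]}\mathcal{P}_k\subset\mathcal{I}_{[g]}$ for all $\alpha\in\{0,g_0,-g_0\}$ and $k\in[g]$, under the standing hypotheses that $\mathcal{P}_{g_0}$ is tight and $G$ is free of $2$-torsion. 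Finally, case (iv) is precisely Lemma~\ref{lemafri1}: part~1 gives $\{\mathcal{P}_{\alpha,[g]},\mathcal{P}_{\beta,[g]}\}\subset\mathcal{I}_{[g]}$ and part~2 (which needs $\mathcal{P}_{g_0}$ tight) gives $\mathcal{P}_{\alpha,[g]}\mathcal{P}_{\beta,[g]}\subset\mathcal{I}_{[g]}$.

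Assembling these, $\{\mathcal{I}_{[g]},\mathcal{I}_{[g]}\}+\mathcal{I}_{[g]}\mathcal{I}_{[g]}\subset\mathcal{I}_{[g]}$, and since $\mathcal{I}_{[g]}$ is by construction a graded linear subspace (it is a sum of graded pieces $\mathcal{P}_{\alpha,[g]}\subset\mathcal{P}_\alpha$ and $\mathcal{P}_h$ for $h\in[g]$), it is a subalgebra of $\mathcal{P}$ in the graded sense defined in the introduction. There is essentially no obstacle left at this stage: the proof is a one-paragraph bookkeeping argument that merely collates the four preceding lemmas, with the only subtlety being to make sure the bilinear expansion covers all cross terms and that the hypotheses (tightness of $\mathcal{P}_{g_0}$, $G$ free of $2$-torsion) are invoked exactly where Lemmas~\ref{lemafri1} and \ref{lemafri101} require them. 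All the genuine work — the connection-chasing and the careful case analysis on sums of elements of $\Sigma$ with multiples of $g_0$ — has already been done in Lemmas~\ref{tight}, \ref{dub1}, \ref{lemafri1}, and \ref{lemafri101}.
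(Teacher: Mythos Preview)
Your proposal is correct and follows essentially the same approach as the paper: decompose $\mathcal{I}_{[g]}$ into $\sum_{\alpha}\mathcal{P}_{\alpha,[g]}$ and $\mathcal{V}_{[g]}$, then invoke Lemmas~\ref{lemafri-1}, \ref{lemafri1}, and \ref{lemafri101} for the three types of cross terms. The paper's own proof is exactly this bookkeeping argument, stated in the same order with the same citations.
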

\begin{proof}
  Since $
{\mathcal I}_{[g]}= (\sum\limits_{\alpha \in \{0,
g_0,-g_0\}}{\mathcal P}_{\alpha,[g]})\oplus {{\mathcal V}}_{[g]}$
we can write
 $$  \{ {\mathcal I}_{[g]}, {\mathcal I}_{[g]}\}
  \subset$$ $$  \sum\limits_{\alpha, \beta \in \{0,
g_0,-g_0\}}\{ {\mathcal P}_{\alpha,[g]},{\mathcal
P}_{\beta,[g]}\}+ \sum\limits_{\alpha \in \{0, g_0,-g_0\}}\{
{\mathcal P}_{\alpha,[g]},
  {{\mathcal V}}_{[g]}\}
 +\{{{\mathcal V}}_{[g]},{{\mathcal V}}_{[g]}\}.$$
 From here,  Lemmas \ref{lemafri-1}, \ref{lemafri1} and
 \ref{lemafri101} allow us to get $  \{ {\mathcal I}_{[g]}, {\mathcal I}_{[g]}\}
  \subset {\mathcal I}_{[g]}.$

 In a similar way we have $   {\mathcal I}_{[g]} {\mathcal I}_{[g]}
  \subset {\mathcal I}_{[g]}$ and consequently  ${\mathcal I}_{[g]}$ is a
  subalgebra of ${\mathcal P}$.
 \end{proof}

\medskip

We call ${\mathcal I}_{[g ]}$ the {\it subalgebra  of ${\mathcal
P}$ associated} to $[g]$.

\section{Decompositions as sum of ideals}

We begin this section by showing that for any $g \in \Sigma$, the
subalgebra ${\mathcal I}_{[g]}$ is actually an ideal of ${\mathcal
P}$. From now on  the group $G$ { will be suppose   free of
2-torsion.}
\begin{proposition}\label{lema1}
If $[g] \neq [h]$ for some $g, h \in \Sigma$ then $\{{\mathcal
I}_{[g]} , {\mathcal I}_{[h]}\}+{\mathcal I}_{[g]}  {\mathcal
I}_{[h]}=0$.
\end{proposition}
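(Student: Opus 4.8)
The plan is to show that every homogeneous product of an element from $\mathcal{I}_{[g]}$ with an element from $\mathcal{I}_{[h]}$ vanishes, by splitting both ideals according to the decomposition $\mathcal{I}_{[g]}=\bigl(\sum_{\alpha}\mathcal{P}_{\alpha,[g]}\bigr)\oplus\mathcal{V}_{[g]}$ and handling each of the resulting bilinear pieces. Thus I would expand
\[
\{\mathcal{I}_{[g]},\mathcal{I}_{[h]}\}\subset\sum_{\alpha,\beta}\{\mathcal{P}_{\alpha,[g]},\mathcal{P}_{\beta,[h]}\}+\sum_{\alpha}\{\mathcal{P}_{\alpha,[g]},\mathcal{V}_{[h]}\}+\sum_{\beta}\{\mathcal{V}_{[g]},\mathcal{P}_{\beta,[h]}\}+\{\mathcal{V}_{[g]},\mathcal{V}_{[h]}\},
\]
and similarly for the associative product, and then argue that under the hypothesis $[g]\neq[h]$ each summand is zero.

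The cornerstone of each case is the following observation, which is already implicit in the proofs of Lemmas~\ref{lemafri-1}, \ref{lemafri1} and \ref{lemafri101}: whenever a homogeneous bracket $\{\mathcal{P}_a,\mathcal{P}_b\}$ (resp.\ product $\mathcal{P}_a\mathcal{P}_b$) is nonzero and $a+b+g_0$ (resp.\ $a+b$) lies in $\Sigma$, then, since $\{a\otimes 0,b\otimes g_0\}$ (resp.\ $\{a\otimes 0,b\otimes 0\}$) is a connection, $a$ is connected to $a+b+g_0$ (resp.\ to $a+b$); and when the resulting degree lands in $\{0,\pm g_0\}$ one uses the tightness of $\mathcal{P}_{g_0}$ together with Remark~\ref{r1} and Lemma~\ref{tight} to again force the two homogeneous components into a common class $[\,\cdot\,]$. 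Running through the summands: for $\{\mathcal{V}_{[g]},\mathcal{V}_{[h]}\}$ one takes $a\in[g]$, $b\in[h]$; if $\{\mathcal{P}_a,\mathcal{P}_b\}\neq0$ then $a+b+g_0\in\Sigma$ or $a+b+g_0\in\{0,\pm g_0\}$, and in either subcase the connection/tightness argument yields $[g]=[a]=[b]=[h]$, contradicting $[g]\neq[h]$, so the bracket is $0$; the associative product is treated identically via $\{a\otimes0,b\otimes0\}$. For the mixed terms $\{\mathcal{P}_{\alpha,[g]},\mathcal{V}_{[h]}\}$ one rewrites $\mathcal{P}_{\alpha,[g]}$ as a sum of $\{\mathcal{P}_h,\mathcal{P}_{-h-g_0+\alpha}\}$ and $\mathcal{P}_k\mathcal{P}_{-k+\alpha}$ with $h,k\in[g]$, applies Jacobi and Leibniz to move the $\mathcal{V}_{[h]}$-factor inside, and arrives at brackets/products of homogeneous components each of which, if nonzero, connects an element of $[g]$ to an element of $[h]$ (here one invokes Lemma~\ref{dub1} exactly as in Lemma~\ref{lemafri1} to guarantee the intermediate degrees lie in $\Sigma$, and Lemma~\ref{tight}-3 for the delicate case where a $-2g_0$ meets a $\mathcal{P}_{0,[g]}$); again this forces $[g]=[h]$, a contradiction, so the term vanishes. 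The term $\{\mathcal{P}_{\alpha,[g]},\mathcal{P}_{\beta,[h]}\}$ is handled by the same expansion on both sides.

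The associative-product statement $\mathcal{I}_{[g]}\mathcal{I}_{[h]}=0$ follows by the same bookkeeping, using commutativity of the associative product, the Leibniz identity to reduce $\mathcal{P}_{\alpha,[g]}\mathcal{P}_{\beta,[h]}$ to homogeneous pieces, and the fact (used in Lemma~\ref{lemafri101}) that $\mathcal{P}_a\mathcal{P}_b\neq0$ with $a+b\in\Sigma$ gives a connection $\{a\otimes0,b\otimes0\}$, while the boundary cases $a+b\in\{0,\pm g_0\}$ are absorbed by Lemma~\ref{tight}-1 and Remark~\ref{r1}. The main obstacle, as in the previous lemmas, is the careful treatment of the exceptional degrees $\pm2g_0,\pm3g_0$: one must check that whenever an intermediate homogeneous component has a degree of the form $\pm ng_0$ with $n\in\{2,3\}$, the freeness from $2$-torsion together with tightness still routes everything back into a single connection class, so that no product can straddle two distinct classes. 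Once that is organized exactly as in Lemmas~\ref{tight}, \ref{dub1}, \ref{lemafri1} and \ref{lemafri101}, every summand collapses to $0$ and the proposition follows.
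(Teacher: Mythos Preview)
Your decomposition of $\{\mathcal{I}_{[g]},\mathcal{I}_{[h]}\}$ and $\mathcal{I}_{[g]}\mathcal{I}_{[h]}$ into the four types of summands is exactly what the paper does, and your treatment of the ``pure'' pieces $\{\mathcal{V}_{[g]},\mathcal{V}_{[h]}\}$ and $\mathcal{V}_{[g]}\mathcal{V}_{[h]}$ is essentially correct: a nonzero bracket or product would produce a connection (the paper uses the three-term connection $\{k\otimes 0, l\otimes g_0, -k\otimes -g_0\}$ to link $k$ directly to $l$, not just to $k+l+g_0$), and the boundary case where the resulting degree falls in $\{0,\pm g_0\}$ is handled by Remark~\ref{r1} alone --- tightness plays no role there.

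The genuine gap is in your plan for the mixed and $\mathcal{P}_{\alpha,[g]}$--$\mathcal{P}_{\beta,[h]}$ terms. You propose to invoke Lemma~\ref{dub1}, Lemma~\ref{tight} and the tightness of $\mathcal{P}_{g_0}$, but tightness is \emph{not} a hypothesis of Proposition~\ref{lema1} (only freeness from $2$-torsion is assumed in this section), so Lemma~\ref{tight} is unavailable. More importantly, those lemmas are built for a different job: they track where a nonzero product \emph{lands} inside $\mathcal{I}_{[g]}$, whereas here you must show the product is \emph{zero}. The paper's argument is instead a bootstrap. Having already proved $\{\mathcal{V}_{[g]},\mathcal{V}_{[h]}\}=0$ and $\mathcal{V}_{[g]}\mathcal{V}_{[h]}=0$, one expands, say, $\{\{\mathcal{P}_k,\mathcal{P}_{-k-g_0+\alpha}\},\mathcal{P}_l\}$ by Jacobi and observes that the inner brackets $\{\mathcal{P}_k,\mathcal{P}_l\}$ and $\{\mathcal{P}_{-k-g_0+\alpha},\mathcal{P}_l\}$ already vanish, because $k$ and $-k-g_0+\alpha$ both lie in $[g]$ while $l\in[h]$. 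The same reduction via Leibniz, commutativity and associativity (plus the elementary observation $\{\mathcal{P}_{\beta,[h]},\mathcal{P}_{-g_0}\}\subset\mathcal{P}_{\beta,[h]}$ for the one residual term) disposes of all remaining summands. No appeal to tightness, to Lemma~\ref{dub1}, or to the exceptional degrees $\pm 2g_0,\pm 3g_0$ is needed at any point.
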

\begin{proof}
We have to study   the products
 \[ \{{\mathcal
I}_{[g]}, {\mathcal I}_{[h]}\} =\]  \[\{
 (\sum\limits_{\alpha \in \{0,
g_0,-g_0\}}{\mathcal P}_{\alpha,[g]})\oplus {\mathcal{V}}_{[g]} ,
(\sum\limits_{\alpha \in \{0, g_0,-g_0\}}{\mathcal
P}_{\alpha,[h]})\oplus
 {\mathcal{V}}_{[h]} \}
\]

and

 \[ {\mathcal
I}_{[g]} {\mathcal I}_{[h]} =\]  \[
 ((\sum\limits_{\alpha \in \{0,
g_0,-g_0\}}{\mathcal P}_{\alpha,[g]})\oplus {\mathcal{V}}_{[g]})
((\sum\limits_{\alpha \in \{0, g_0,-g_0\}}{\mathcal
P}_{\alpha,[h]})\oplus
 {\mathcal{V}}_{[h]}).
\]
We begin by considering the summand  $\{{\mathcal{V}}_{[g]} ,
{\mathcal{V}}_{[h]}\}$ of the first product. Suppose  there exist
$k \in [g]$ and $l \in [h]$ such that $0 \neq \{{\mathcal P}_{k} ,
{\mathcal P}_{l}\} \subset {\mathcal P}_{k+l+g_0} .$  We have to
distinguish two cases. In the first one $k+l+g_0 \notin \{0,
g_0,-g_0\} $ and so $k+l+g_0 \in \Sigma$. Then the connection
$\{k\otimes 0, l \otimes g_0,-k\otimes -g_0\}$ gives us
$[g]=[k]=[l]=[h]$, a contradiction. Hence $\{{\mathcal{V}}_{[g]} ,
{\mathcal{V}}_{[h]}\} = 0$ in this case. In the second
possibility, $k+l+g_0 \in \{0, g_0,-g_0\} $. From here $l \in
\{-k, -g_0-k, -2g_0-k \}$ and by Remark \ref{r1} we get $[k]=[l]$,
a contradiction, then
\begin{equation}\label{zoor}
\{{\mathcal{V}}_{[g]} , {\mathcal{V}}_{[h]}\} = 0
\end{equation}
 in any case.

 Consider now the summand  ${\mathcal{V}}_{[g]}
{\mathcal{V}}_{[h]} $ in the second product and suppose  there
exist $k \in [g]$ and $l \in [h]$ such that $0 \neq {\mathcal
P}_{k}  {\mathcal P}_{l} \subset {\mathcal P}_{k+l} .$  In case
$k+l\notin \{0, g_0,-g_0\}$, the connection $\{k\otimes 0, l
\otimes 0, -k \otimes 0 \}$ gives us $[k]=[l]$, a contradiction,
while in case $k+l\in \{0, g_0,-g_0\}$ necessarily $l \in \{-k,
g_0-k, -g_0-k\}$ being then, see Remark \ref{r1}, $[k]=[l]$,  a
contradiction. We have shown
\begin{equation}\label{zoor1}
{\mathcal{V}}_{[g]}  {\mathcal{V}}_{[h]} = 0.
\end{equation}
In order to study the product  $\{\sum\limits_{\alpha \in \{0,
g_0,-g_0\}}
 {\mathcal P}_{\alpha,[g]} ,
 {\mathcal{V}}_{[h]} \},$
 consider any   $$\{\{{\mathcal P}_{k},{\mathcal
P}_{-k-g_0+\alpha}\} , {\mathcal P}_{l}\} $$ with $k \in [g]$
satisfying  $-k-g_0+\alpha \in \Sigma$  and $l \in [h]$. We have
by Jacobi identity and anticommutativity that $$\{\{{\mathcal
P}_{k},{\mathcal P}_{-k-g_0+\alpha}\} , {\mathcal P}_{l}\} \subset
\{\{{\mathcal P}_{k},{\mathcal P}_{l}\} , {\mathcal
P}_{-k-g_0+\alpha}\} + \{\{{\mathcal P}_{-k-g_0+\alpha},{\mathcal
P}_{l}\} , {\mathcal P}_{k}\}.$$ Since by Equation (\ref{zoor})
and Remark \ref{r1} we get $\{{\mathcal P}_{k},{\mathcal
P}_{l}\}=\{{\mathcal P}_{-k-g_0+\alpha} , {\mathcal P}_{l}\}=0$ we
obtain $\{\{{\mathcal P}_{k},{\mathcal P}_{-k-g_0+\alpha}\} ,
{\mathcal P}_{l}\} =0.$ If we now take any  $ \{{\mathcal
P}_{k}{\mathcal P}_{-k+\alpha}, {\mathcal P}_{l} \} $ with $k \in
[g]$  such that  $-k+\alpha \in \Sigma \cup \{-g_0\}$   and $l \in
[h]$ then we get by Leibniz identity and commutativity that
$\{{\mathcal P}_{k}{\mathcal P}_{-k+\alpha}, {\mathcal P}_{l} \}
\subset \{{\mathcal P}_{l}{\mathcal P}_{k}\} {\mathcal
P}_{-k+\alpha}  + {\mathcal P}_{k}\{{\mathcal P}_{l}, {\mathcal
P}_{-k+\alpha} \}$, but by Equation (\ref{zoor}) we have
$\{{\mathcal P}_{l}{\mathcal P}_{k}\}=\{{\mathcal P}_{l},
{\mathcal P}_{-k+\alpha} \}=0$ in case $-k+\alpha \in \Sigma$. If
$-k+\alpha =-g_0$ then
${\mathcal P}_{k}\{{\mathcal P}_{l}, {\mathcal P}_{-k+\alpha} \}
\subset {\mathcal P}_{k} {\mathcal P}_{l} =0$  by Equation
(\ref{zoor1})  being so $ \{{\mathcal P}_{k}{\mathcal
P}_{-k+\alpha}, {\mathcal P}_{l} \}=0 $ in any case. We have
proved
\begin{equation}\label{aa}
\{
 \sum\limits_{\alpha \in \{0,
g_0,-g_0\}}{\mathcal P}_{\alpha,[g]} ,
 {\mathcal{V}}_{[h]} \} + \{
 {\mathcal{V}}_{[g]},\sum\limits_{\alpha \in \{0,
g_0,-g_0\}}{\mathcal P}_{\alpha,[h]} \} =0.
\end{equation}

In a similar way as above, taking now into account Leibniz
identity,  commutativity and associativity we get
\begin{equation}\label{oiuaa}
 (\sum\limits_{\alpha \in \{0,
g_0,-g_0\}}{\mathcal P}_{\alpha,[g]})
 {\mathcal{V}}_{[h]}  +
 {\mathcal{V}}_{[g]}(\sum\limits_{\alpha \in \{0,
g_0,-g_0\}}{\mathcal P}_{\alpha,[h]}) =0.
\end{equation}

 Finally, let us consider  the
case  $\sum\limits_{\alpha, \beta \in \{0, g_0,-g_0\}}\{{\mathcal
P}_{\alpha,[g]}, {\mathcal P}_{\beta,[h]}\}.$ By arguing as in the
previous case, taking now  into account Equation (\ref{aa}) and
the fact (easy to prove) $\{{\mathcal P}_{\beta,[h]}, {\mathcal
P}_{-g_0}\} \subset {\mathcal P}_{\beta,[h]}$, we get
\begin{equation}\label{aaa}
\{
 \sum\limits_{\alpha \in \{0,
g_0,-g_0\}}{\mathcal P}_{\alpha,[g]} , \sum\limits_{\alpha \in
\{0, g_0,-g_0\}}{\mathcal P}_{\alpha,[h]} \} =0.
\end{equation}
In a similar way, by considering now Equations  (\ref{aa}) and
(\ref{oiuaa}), we get
\begin{equation}\label{aaaaa}
 (\sum\limits_{\alpha \in \{0,
g_0,-g_0\}}{\mathcal P}_{\alpha,[g]} )( \sum\limits_{\alpha \in
\{0, g_0,-g_0\}}{\mathcal P}_{\alpha,[h]} ) =0.
\end{equation}

From Equations (\ref{zoor}), (\ref{aa}) and (\ref{aaa}) we get
that $ \{{\mathcal I}_{[g]}, {\mathcal I}_{[h]}\} =0$ while from
Equations (\ref{zoor1}), (\ref{oiuaa}) and (\ref{aaaaa}) that
${\mathcal I}_{[g]} {\mathcal I}_{[h]} =0$ which complete the
proof.
\end{proof}

\begin{theorem}\label{teo1}
Suppose  any ${\mathcal P}_{\alpha}$, $\alpha \in \{0,g_0,-g_0\}$,
is tight then the following assertions hold.
\begin{enumerate}
\item[{\rm 1.}] For any $g \in \Sigma$, the  subalgebra ${\mathcal
I}_{[{g}]} $ of ${\mathcal P}$ associated to  $[{g}]$ is an ideal
of ${\mathcal P}$.
\smallskip
\item[{\rm 2.}] If ${\mathcal P}$ is simple, then there exists a
connection between any two elements of  ${\Sigma}$.
\end{enumerate}
\end{theorem}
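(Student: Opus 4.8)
The plan is to prove part 1 (that $\mathcal{I}_{[g]}$ is an ideal) and then read off part 2 almost immediately. Under the standing hypotheses, Proposition \ref{lemasubalge} already gives that $\mathcal{I}_{[g]}$ is a subalgebra, and Proposition \ref{lema1} gives $\{\mathcal{I}_{[g]},\mathcal{I}_{[h]}\}+\mathcal{I}_{[g]}\mathcal{I}_{[h]}=0$ whenever $[h]\neq[g]$. From these two facts I would first record the key auxiliary inclusion: for every $h\in\Sigma$,
$$\{\mathcal{I}_{[g]},\mathcal{P}_h\}+\mathcal{I}_{[g]}\mathcal{P}_h\subset\mathcal{I}_{[g]}.$$
Indeed $\mathcal{P}_h\subset\mathcal{V}_{[h]}\subset\mathcal{I}_{[h]}$, so if $[h]=[g]$ this is just the subalgebra property, and if $[h]\neq[g]$ the left-hand side is even zero by Proposition \ref{lema1}. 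By graded (anti)commutativity the analogous inclusion with $\mathcal{P}_h$ on the left also holds.

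The next step is to propagate this from the layers $\mathcal{P}_h$, $h\in\Sigma$, to the remaining layers $\mathcal{P}_\gamma$, $\gamma\in\{0,g_0,-g_0\}$, using tightness. Since each $\mathcal{P}_\gamma$ is tight, I may write it as a sum of terms $\{\mathcal{P}_h,\mathcal{P}_p\}$ and $\mathcal{P}_h\mathcal{P}_p$ with $h,p\in\Sigma$. Then I expand $\{\mathcal{I}_{[g]},\mathcal{P}_\gamma\}$ and $\mathcal{I}_{[g]}\mathcal{P}_\gamma$: by the Jacobi identity, $\{\mathcal{I}_{[g]},\{\mathcal{P}_h,\mathcal{P}_p\}\}\subset\{\{\mathcal{I}_{[g]},\mathcal{P}_h\},\mathcal{P}_p\}+\{\mathcal{P}_h,\{\mathcal{I}_{[g]},\mathcal{P}_p\}\}$, which lies in $\mathcal{I}_{[g]}$ by the auxiliary inclusion used twice; by the Leibniz identity and associativity, $\{\mathcal{I}_{[g]},\mathcal{P}_h\mathcal{P}_p\}$ and $\mathcal{I}_{[g]}(\mathcal{P}_h\mathcal{P}_p)$ are likewise reduced to expressions each factor of which lies in $\mathcal{I}_{[g]}$ or in some $\mathcal{P}_h$ with $h\in\Sigma$, hence in $\mathcal{I}_{[g]}$. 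The one term not directly covered is $\mathcal{I}_{[g]}\{\mathcal{P}_h,\mathcal{P}_p\}$; for it I would use the Leibniz identity \emph{in reverse}: for homogeneous $w\in\mathcal{I}_{[g]}$, $a\in\mathcal{P}_h$, $b\in\mathcal{P}_p$ one has that $\{a,b\}w$ is a scalar combination of $\{a,bw\}$ and $b\{a,w\}$, and since $bw\in\mathcal{I}_{[g]}\mathcal{P}_p\subset\mathcal{I}_{[g]}$ and $\{a,w\}\in\{\mathcal{P}_h,\mathcal{I}_{[g]}\}\subset\mathcal{I}_{[g]}$, both pieces return to $\mathcal{I}_{[g]}$ by the auxiliary inclusion. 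Combining the two kinds of layers with $\mathcal{P}=\mathcal{P}_0+\mathcal{P}_{g_0}+\mathcal{P}_{-g_0}+\sum_{h\in\Sigma}\mathcal{P}_h$ yields $\{\mathcal{I}_{[g]},\mathcal{P}\}+\mathcal{I}_{[g]}\mathcal{P}\subset\mathcal{I}_{[g]}$ and, by (anti)commutativity, $\{\mathcal{P},\mathcal{I}_{[g]}\}+\mathcal{P}\mathcal{I}_{[g]}\subset\mathcal{I}_{[g]}$; together with the subalgebra property this shows $\mathcal{I}_{[g]}$ is an ideal.

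For part 2, suppose $\mathcal{P}$ is simple. If $\Sigma=\emptyset$ the statement is vacuous, so fix $g\in\Sigma$. By part 1, $\mathcal{I}_{[g]}$ is an ideal; it is nonzero because it contains $0\neq\mathcal{P}_g$, so simplicity forces $\mathcal{I}_{[g]}=\mathcal{P}$. Now I compare homogeneous components: since $\mathcal{V}_{[g]}=\bigoplus_{k\in[g]}\mathcal{P}_k$ and the subspaces $\mathcal{P}_{\alpha,[g]}$ sit in degrees $0,\pm g_0\notin\Sigma$, for $h\in\Sigma$ the degree-$h$ component of $\mathcal{I}_{[g]}$ is $\mathcal{P}_h$ when $h\in[g]$ and $\{0\}$ otherwise, whereas the degree-$h$ component of $\mathcal{P}$ is $\mathcal{P}_h\neq0$. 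Hence every $h\in\Sigma$ lies in $[g]$, i.e. $g$ is connected to $h$; by the transitivity and symmetry established in Proposition \ref{equivalence}, any two elements of $\Sigma$ are connected.

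The step I expect to be the main obstacle is the tightness-driven reduction in part 1: one must check that, after the ``tight'' expansion of each $\mathcal{P}_\gamma$, every summand can genuinely be unwound — using Jacobi, Leibniz, its reverse form, associativity and graded (anti)commutativity — into a nest in which each factor is either in $\mathcal{I}_{[g]}$ or in a single homogeneous layer $\mathcal{P}_h$ with $h\in\Sigma$, and that the bicharacter scalars $\epsilon(\cdot,\cdot)$ acquired along the way do not affect the inclusions (which they do not, being nonzero). Everything else is bookkeeping on top of Propositions \ref{lemasubalge}, \ref{lema1} and \ref{equivalence}.
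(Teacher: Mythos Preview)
Your proposal is correct, but it takes a more computational route than the paper. The paper's proof of part~1 hinges on a single observation you did not make explicit: tightness of each $\mathcal{P}_\alpha$, $\alpha\in\{0,g_0,-g_0\}$, immediately gives $\mathcal{P}_\alpha=\sum_{[h]\in\Sigma/\sim}\mathcal{P}_{\alpha,[h]}$, and hence the global decomposition $\mathcal{P}=\sum_{[h]\in\Sigma/\sim}\mathcal{I}_{[h]}$. Once this is in hand, the ideal property of $\mathcal{I}_{[g]}$ follows in one line from Propositions~\ref{lemasubalge} and~\ref{lema1}: $\{\mathcal{I}_{[g]},\mathcal{P}\}+\mathcal{I}_{[g]}\mathcal{P}\subset\{\mathcal{I}_{[g]},\mathcal{I}_{[g]}\}+\mathcal{I}_{[g]}\mathcal{I}_{[g]}+\sum_{[h]\neq[g]}(\{\mathcal{I}_{[g]},\mathcal{I}_{[h]}\}+\mathcal{I}_{[g]}\mathcal{I}_{[h]})\subset\mathcal{I}_{[g]}$. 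Your approach instead expands each tight $\mathcal{P}_\gamma$ into brackets and products of $\Sigma$-graded pieces and then unwinds via Jacobi, Leibniz (and its reverse), and associativity; this is valid and self-contained, but it reproves by hand what the decomposition $\mathcal{P}=\sum_{[h]}\mathcal{I}_{[h]}$ delivers for free. The paper's route is shorter and also yields that decomposition as a byproduct (used again in Theorem~\ref{teo2}); your route has the minor advantage of never needing to match the tightness sum against the precise definition of $\mathcal{P}_{\alpha,[h]}$. Part~2 is handled the same way in both.
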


\begin{proof}
1. Since   we  can write $\bigoplus\limits_{h \in \Sigma}{\mathcal
P}_h = \bigoplus\limits_{[h] \in \Sigma /\sim}{\mathcal V}_{[h]} $
and ${\mathcal P}_{\alpha} = \sum\limits_{[h] \in \Sigma
/\sim}{\mathcal P}_{\alpha,[h]} $ for any $\alpha \in
\{0,g_0,-g_0\}$,

we have
\begin{equation}\label{cholo}
 {\mathcal P} =  {\mathcal P}_{0}+{\mathcal P}_{g_0}+{\mathcal P}_{-g_0}\oplus (\bigoplus_{h \in
\Sigma}{{\mathcal P}_h})= \sum_{[h] \in \Sigma /\sim}{\mathcal
I}_{[h]}.
\end{equation}
From here,
 by
Propositions \ref{lemasubalge} and \ref{lema1}  we have
$$\{{\mathcal I}_{[{g}]},
{\mathcal P}\}+ \{{\mathcal P},{\mathcal I}_{[{g}]}\} + {\mathcal
I}_{[{g}]} {\mathcal P} + {\mathcal P}{\mathcal I}_{[{g}]} \subset
\{{\mathcal I}_{[{g}]},{\mathcal I}_{[{g}]}\}+\sum\limits_{[h]
\neq [g]} \{{\mathcal I}_{[{g}]}, {\mathcal I}_{[{h}]}\} +
{\mathcal I}_{[{g}]}{\mathcal I}_{[{g}]}+ \sum\limits_{[h] \neq
[g]} {\mathcal I}_{[{g}]} {\mathcal I}_{[{h}]} \subset {\mathcal
I}_{[{g}]}$$ as desired.

\smallskip

(ii) The simplicity of ${\mathcal P}$ applies to get that
${\mathcal I}_{[{g}]}={\mathcal P}$ for any $g \in \Sigma$. Hence
${[{g}]} = \Sigma$ and so
any couple of elements in $\Sigma$  are connected.
\end{proof}

As consequence of Equation (\ref{cholo}), Theorem \ref{teo1} and
Proposition \ref{lema1} we can state the following result.

\begin{theorem} \label{teo2}
Suppose any ${\mathcal P}_{\alpha}$,  $\alpha \in \{0,g_0,-g_0\}$,
is tight. It follows
 $$ {\mathcal P}=
\sum\limits_{[g] \in \Sigma/\sim} {\mathcal I}_{[g]},$$ being any
${\mathcal I}_{[g]}$ one of the ideals given in Theorem
\ref{teo1}.  Moreover, $\{{\mathcal I}_{[g]} , {\mathcal
I}_{[h]}\}+{\mathcal I}_{[g]} {\mathcal I}_{[h]}=0$ whenever $[g]
\neq [h]$.
\end{theorem}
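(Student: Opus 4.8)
\textbf{Proof plan for Theorem \ref{teo2}.}
The plan is to assemble this theorem directly from the three facts already established in the excerpt, so almost no new work is required. First I would invoke Equation (\ref{cholo}) from the proof of Theorem \ref{teo1}, which gives the decomposition
$${\mathcal P} = {\mathcal P}_0 + {\mathcal P}_{g_0} + {\mathcal P}_{-g_0} \oplus \Big(\bigoplus_{h \in \Sigma} {\mathcal P}_h\Big) = \sum_{[g] \in \Sigma/\sim} {\mathcal I}_{[g]};$$
recall that this identity is itself a consequence of the two decompositions $\bigoplus_{h \in \Sigma}{\mathcal P}_h = \bigoplus_{[h] \in \Sigma/\sim}{\mathcal V}_{[h]}$ and ${\mathcal P}_\alpha = \sum_{[h] \in \Sigma/\sim}{\mathcal P}_{\alpha,[h]}$ for $\alpha \in \{0,g_0,-g_0\}$, the latter holding precisely because each such ${\mathcal P}_\alpha$ is assumed tight. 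This already yields the displayed sum decomposition ${\mathcal P} = \sum_{[g]\in\Sigma/\sim} {\mathcal I}_{[g]}$.

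Next I would record that, under the standing hypothesis that the group $G$ is free of $2$-torsion and that ${\mathcal P}_{g_0}$ (in particular) is tight, Theorem \ref{teo1}-1 tells us that each ${\mathcal I}_{[g]}$ is not merely a subalgebra but an ideal of ${\mathcal P}$; this is the content of the phrase ``being any ${\mathcal I}_{[g]}$ one of the ideals given in Theorem \ref{teo1}.'' Finally, the orthogonality statement $\{{\mathcal I}_{[g]},{\mathcal I}_{[h]}\} + {\mathcal I}_{[g]}{\mathcal I}_{[h]} = 0$ for $[g] \neq [h]$ is exactly Proposition \ref{lema1}, which was proved under the same hypothesis that $G$ is free of $2$-torsion. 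So the proof is simply: cite Equation (\ref{cholo}) for the decomposition, cite Theorem \ref{teo1}-1 for the ideal property, and cite Proposition \ref{lema1} for the orthogonality.

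I do not expect any genuine obstacle here, since the theorem is explicitly flagged in the text as a ``consequence of Equation (\ref{cholo}), Theorem \ref{teo1} and Proposition \ref{lema1}.'' The only point deserving a sentence of care is making sure the hypotheses line up: Theorem \ref{teo2} assumes every ${\mathcal P}_\alpha$ with $\alpha \in \{0,g_0,-g_0\}$ is tight, which is exactly the hypothesis of Theorem \ref{teo1}, and the $2$-torsion-free assumption on $G$ is the global standing assumption declared at the start of \S 3, so Proposition \ref{lema1} applies verbatim. Hence the write-up is a three-line citation argument with no computation.
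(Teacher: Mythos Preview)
Your proposal is correct and matches the paper's approach exactly: the paper states Theorem~\ref{teo2} explicitly as a consequence of Equation~(\ref{cholo}), Theorem~\ref{teo1}, and Proposition~\ref{lema1}, with no further argument. Your check that the hypotheses line up (tightness of each ${\mathcal P}_\alpha$ for $\alpha\in\{0,g_0,-g_0\}$ and the standing $2$-torsion-free assumption on $G$) is the only care needed, and you have handled it correctly.
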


\smallskip

As usual, the {\it center} of ${{{\mathcal P}}}$ is defined as the
set $\{v\in {{{\mathcal P}}}:\{v, {{{\mathcal P}}}\}+ \{
{{{\mathcal P}}},v\} +  v {{{\mathcal P}}} +  {{{\mathcal
P}}}v=0\}$.

\begin{corollary}\label{co1}
If ${\mathcal P}$ is centerless and any ${\mathcal P}_{\alpha}$,
$\alpha \in \{0,g_0,-g_0\}$, is tight then ${\mathcal P}$ is the
direct sums of the ideals given in Theorem \ref{teo1},
\[
{\mathcal P} =\bigoplus_{[g] \in \Sigma/\sim} {\mathcal I}_{[g]}.
\]
\end{corollary}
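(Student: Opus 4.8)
The plan is to derive Corollary \ref{co1} from Theorem \ref{teo2} together with the centerless hypothesis, the only extra ingredient being the promotion of the sum $\mathcal{P}=\sum_{[g]\in\Sigma/\sim}\mathcal{I}_{[g]}$ to a \emph{direct} sum. So first I would recall from Equation (\ref{cholo}) and Theorem \ref{teo2} that $\mathcal{P}=\sum_{[g]\in\Sigma/\sim}\mathcal{I}_{[g]}$ with each $\mathcal{I}_{[g]}$ an ideal, and that $\{\mathcal{I}_{[g]},\mathcal{I}_{[h]}\}+\mathcal{I}_{[g]}\mathcal{I}_{[h]}=0$ whenever $[g]\neq[h]$. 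It then suffices to show that for each fixed $[g]$ one has $\mathcal{I}_{[g]}\cap\big(\sum_{[h]\neq[g]}\mathcal{I}_{[h]}\big)=0$.

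For the intersection, let $v$ be an element of $\mathcal{I}_{[g]}\cap\big(\sum_{[h]\neq[g]}\mathcal{I}_{[h]}\big)$. Viewing $v$ inside $\mathcal{I}_{[g]}$ and multiplying (both the Lie bracket and the associative product, on both sides) against all of $\mathcal{P}=\sum_{[h]}\mathcal{I}_{[h]}$: the component of $v$ that lies in $\sum_{[h]\neq[g]}\mathcal{I}_{[h]}$ kills $\mathcal{I}_{[g]}$ by Theorem \ref{teo2}, while the same $v$, now regarded as sitting in $\sum_{[h]\neq[g]}\mathcal{I}_{[h]}$, kills each $\mathcal{I}_{[k]}$ with $[k]=[g]$ again by Theorem \ref{teo2}. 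Hence $\{v,\mathcal{P}\}+\{\mathcal{P},v\}+v\mathcal{P}+\mathcal{P}v=0$, so $v$ lies in the center of $\mathcal{P}$, which is zero by hypothesis; thus $v=0$. This gives $\mathcal{P}=\bigoplus_{[g]\in\Sigma/\sim}\mathcal{I}_{[g]}$.

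The main (minor) obstacle is purely bookkeeping: one must be a little careful that when $v\in\mathcal{I}_{[g]}$ is written as $v=\sum_{[h]\neq[g]}v_{[h]}$ with $v_{[h]}\in\mathcal{I}_{[h]}$, the product of $v$ with an arbitrary $w\in\mathcal{I}_{[k]}$ (any class $[k]$, including $[k]=[g]$) vanishes. If $[k]=[g]$, use the second description of $v$ (the sum over $[h]\neq[g]$) so that every $v_{[h]}$ annihilates $w$; if $[k]\neq[g]$, use the first description ($v\in\mathcal{I}_{[g]}$) so that $v$ annihilates $w$ since $[g]\neq[k]$. Summing over all $[k]$ and using $\mathcal{P}=\sum_{[k]}\mathcal{I}_{[k]}$ covers all of $\mathcal{P}$. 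This argument does not need the tightness hypotheses beyond what Theorem \ref{teo2} already assumed, so no further conditions are required.

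\begin{proof}
By Equation (\ref{cholo}) and Theorem \ref{teo2} we have ${\mathcal P}=\sum_{[g]\in\Sigma/\sim}{\mathcal I}_{[g]}$, where each ${\mathcal I}_{[g]}$ is an ideal of ${\mathcal P}$ (Theorem \ref{teo1}) and
$$\{{\mathcal I}_{[g]},{\mathcal I}_{[h]}\}+{\mathcal I}_{[g]}{\mathcal I}_{[h]}=0\quad\hbox{whenever }[g]\neq[h].$$
To prove that the sum is direct it suffices to check that
$${\mathcal I}_{[g]}\cap\Big(\sum\limits_{[h]\neq[g]}{\mathcal I}_{[h]}\Big)=0$$
for every $[g]\in\Sigma/\sim$. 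Let $v$ belong to this intersection and write $v=\sum_{[h]\neq[g]}v_{[h]}$ with $v_{[h]}\in{\mathcal I}_{[h]}$. Take an arbitrary homogeneous $w\in{\mathcal P}$; since ${\mathcal P}=\sum_{[k]}{\mathcal I}_{[k]}$ we may assume $w\in{\mathcal I}_{[k]}$ for some $[k]\in\Sigma/\sim$. If $[k]\neq[g]$, then using $v\in{\mathcal I}_{[g]}$ and Theorem \ref{teo2} we get $\{v,w\}+\{w,v\}+vw+wv=0$. If $[k]=[g]$, then using the expression $v=\sum_{[h]\neq[g]}v_{[h]}$ and applying Theorem \ref{teo2} to each summand (as $[h]\neq[g]=[k]$) we again get $\{v,w\}+\{w,v\}+vw+wv=0$. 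Hence $\{v,{\mathcal P}\}+\{{\mathcal P},v\}+v{\mathcal P}+{\mathcal P}v=0$, so $v$ lies in the center of ${\mathcal P}$. Since ${\mathcal P}$ is centerless, $v=0$. Therefore the sum is direct and
$${\mathcal P}=\bigoplus\limits_{[g]\in\Sigma/\sim}{\mathcal I}_{[g]},$$
as claimed.
\end{proof}
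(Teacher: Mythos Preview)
Your proof is correct and follows essentially the same approach as the paper's own proof: take an element $v$ in ${\mathcal I}_{[g]}\cap\sum_{[h]\neq[g]}{\mathcal I}_{[h]}$, use the annihilation relations $\{{\mathcal I}_{[g]},{\mathcal I}_{[h]}\}+{\mathcal I}_{[g]}{\mathcal I}_{[h]}=0$ for $[g]\neq[h]$ from Theorem~\ref{teo2} to conclude that $v$ lies in the center, and then invoke centerlessness. Your version spells out the case split $[k]=[g]$ versus $[k]\neq[g]$ more explicitly than the paper does, but the argument is the same.
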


\begin{proof}

We have to show the direct character of the sum. Given $x\in
{\mathcal I}_{[g]} \cap \sum\limits_{\tiny{\begin{array}{c}
  [h] \in \Sigma / \sim  \\
h \nsim g\\
\end{array}}} {\mathcal I}_{[h]},$
by using  the fact  $\{{\mathcal I}_{[g]} , {\mathcal
I}_{[h]}\}+{\mathcal I}_{[g]} {\mathcal I}_{[h]} = 0$ for $[g]
\neq [h]$ we obtain  $\{x , {\mathcal P}\} + \{ {\mathcal P},x\} +
x{\mathcal P} + {\mathcal P}x= 0$. That is,
$x$ belongs to the center of ${\mathcal P}$ and so $x=0$  as
desired.
\end{proof}


\section{The simple components}


In this section we study if any of the components in the
decomposition given in Corollary \ref{co1} is simple. Under mild
conditions we give an affirmative answer and furthermore a second
Wedderburn-type theorem is stated. Finally,  we recall that in
this section the group $G$ is supposed to be
 free of 2-torsion.

\begin{lemma}\label{lema4}
Let ${\mathcal P}$ be  centerless  and   with ${\mathcal
P}_{\beta}$ tight for $\beta \in \{0, \pm g_0,\pm 2g_0, -3g_0\}$.
 If $I$
is an ideal of ${\mathcal P}$ such that  $I \subset {\mathcal
P}_{0}+ {\mathcal P}_{g_0} + {\mathcal P}_{-g_0}$ then $I=\{0\}$.
\end{lemma}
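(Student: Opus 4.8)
The strategy is to show that an ideal $I$ contained in ${\mathcal P}_0 + {\mathcal P}_{g_0} + {\mathcal P}_{-g_0}$ must in fact lie in the center of ${\mathcal P}$, whence $I = \{0\}$ by the centerless hypothesis. Write $I = I_0 \oplus I_{g_0} \oplus I_{-g_0}$ with $I_\alpha \subset {\mathcal P}_\alpha$. The key point is that $I$ has no components in the restricted support $\Sigma$, so it cannot "reach" the pieces ${\mathcal P}_h$ with $h \in \Sigma$ in a way that produces a connection; combined with the tightness of the ${\mathcal P}_\beta$ (which lets us rewrite each ${\mathcal P}_\beta$ as sums of products $\{{\mathcal P}_h, {\mathcal P}_p\}$ and ${\mathcal P}_k {\mathcal P}_q$ with $h,p,k,q$ in the restricted support away from $\pm 2g_0, -3g_0$), I will show $\{I, {\mathcal P}_h\} = \{I, {\mathcal P}_\beta\} = I {\mathcal P}_h = I {\mathcal P}_\beta = 0$ for all relevant $h, \beta$.

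\textbf{Key steps.} First I would handle the products of $I$ with the restricted support: for $h \in \Sigma$ and $\alpha \in \{0, g_0, -g_0\}$, the product $\{I_\alpha, {\mathcal P}_h\}$ lands in ${\mathcal P}_{\alpha + h + g_0}$ and ${\mathcal I}_\alpha {\mathcal P}_h$ lands in ${\mathcal P}_{\alpha + h}$; since $I$ is an ideal, $\{I_\alpha, {\mathcal P}_h\} \subset I$, but $I \cap {\mathcal P}_g = 0$ for $g \in \Sigma$, so this product is nonzero only if $\alpha + h + g_0 \in \{0, \pm g_0\}$ (resp.\ $\alpha + h \in \{0, \pm g_0\}$), which forces $h \in \{0, \pm g_0, \pm 2g_0, -3g_0, 3g_0\}$ — contradicting $h \in \Sigma$ unless $h$ is one of $\pm 2g_0, \pm 3g_0$. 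These exceptional cases have to be killed separately: using tightness of ${\mathcal P}_{2g_0}, {\mathcal P}_{-2g_0}, {\mathcal P}_{-3g_0}$ one rewrites ${\mathcal P}_{\pm 2g_0}, {\mathcal P}_{-3g_0}$ in terms of brackets/products of components genuinely in $\Sigma$ (excluding $\pm 2g_0, \pm 3g_0$), then pushes the Jacobi and Leibniz identities through to reduce $\{I, {\mathcal P}_{\pm 2g_0}\}$ etc.\ to the already-handled generic case. Second, I would use tightness of ${\mathcal P}_0, {\mathcal P}_{g_0}, {\mathcal P}_{-g_0}$ to write each ${\mathcal P}_\beta$, $\beta \in \{0, \pm g_0\}$, as a sum of $\{{\mathcal P}_h, {\mathcal P}_p\}$ and ${\mathcal P}_k {\mathcal P}_q$ with $h, p, k, q$ in $\Sigma$ away from $\pm 2g_0, -3g_0$; then apply Jacobi/Leibniz to move $I$ past one factor, landing in products of $I$ with support elements that were shown to vanish in the first step. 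This yields $\{I, {\mathcal P}_\beta\} = I {\mathcal P}_\beta = 0$ for $\beta \in \{0, \pm g_0\}$.

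\textbf{Conclusion.} Combining the two steps, $\{I, {\mathcal P}_g\} = I {\mathcal P}_g = 0$ for every $g \in G$ with ${\mathcal P}_g \neq 0$, hence $\{I, {\mathcal P}\} = \{{\mathcal P}, I\} = I{\mathcal P} = {\mathcal P}I = 0$ (using anticommutativity and commutativity to get both sides). So $I$ lies in the center of ${\mathcal P}$, which is trivial by hypothesis, giving $I = \{0\}$.

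\textbf{Main obstacle.} The delicate part is the bookkeeping around the exceptional degrees $\pm 2g_0, \pm 3g_0$: tightness is only assumed for $\beta \in \{0, \pm g_0, \pm 2g_0, -3g_0\}$ (note $3g_0$ is \emph{not} in this list), so one must check that the degree combinations $\alpha + h + g_0$ and $\alpha + h$ with $\alpha \in \{0, \pm g_0\}$ never force $h = 3g_0$, or else handle $h = 3g_0$ by a separate connection/Remark~\ref{r1} argument. Getting the case analysis on these torsion-sensitive degrees exactly right — and invoking the "free of 2-torsion" hypothesis at the points where $2g_0 = 0$ or $4g_0 = 0$ would otherwise create spurious coincidences — is where the real work lies; the rest is a routine propagation of the Jacobi and Leibniz identities as in the proofs of Lemmas~\ref{lemafri1} and \ref{lemafri101}.
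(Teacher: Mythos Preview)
Your proposal is correct and follows essentially the same approach as the paper: constrain the degree of any nonzero product $\{I_\alpha,{\mathcal P}_h\}$ or $I_\alpha{\mathcal P}_h$ (which must land back in $I\subset{\mathcal P}_0+{\mathcal P}_{g_0}+{\mathcal P}_{-g_0}$) to force $h$ into the exceptional set $\{0,\pm g_0,\pm 2g_0,-3g_0\}$, then invoke tightness of those ${\mathcal P}_h$ together with Jacobi/Leibniz to produce a nonzero product with some ${\mathcal P}_r$ where $r$ lies genuinely outside that set, a contradiction. The paper organizes this as a single proof by contradiction (centerlessness furnishes one nonzero product, and one tightness pass already contradicts the degree constraint), so your separate ``second step'' for $\beta\in\{0,\pm g_0\}$ is absorbed into that single pass; also, your worry about $h=3g_0$ is unfounded, since the case analysis gives exactly $h\in\{0,\pm g_0,-2g_0,-3g_0\}$ for the bracket and $h\in\{0,\pm g_0,\pm 2g_0\}$ for the associative product, matching the tightness hypotheses on the nose.
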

\begin{proof}
Suppose there exists a nonzero ideal ${\mathcal  I}$ of ${\mathcal
P}$ contained in  $  {\mathcal P}_{0}+ {\mathcal P}_{g_0} +
{\mathcal P}_{-g_0}$.  Since can write
$$I= (I \cap {\mathcal
P}_{0})+ (I \cap {\mathcal P}_{g_0}) + (I \cap {\mathcal
P}_{-g_0}),$$ some $I \cap {\mathcal P}_{\alpha}\neq 0$ for
$\alpha \in \{0, g_0, -g_0\}$. Taking into account ${\mathcal P}$
is centerless, there exists $h \in \Sigma \cup \{0, g_0,-g_0\}$
such that either $\{I \cap {\mathcal P}_{\alpha}, {\mathcal
P}_{h}\} \neq 0$ or $(I \cap {\mathcal P}_{\alpha}) {\mathcal
P}_{h} \neq 0$. In the first case, $0 \neq \{I \cap {\mathcal
P}_{\alpha}, {\mathcal P}_{h}\} \subset {\mathcal
P}_{\alpha+h+g_0} \cap ({\mathcal P}_{0}+ {\mathcal P}_{g_0} +
{\mathcal P}_{-g_0})$ and so necessarily
\begin{equation}\label{gym}
h \in \{0,  g_0, -g_0, -2g_0, -3g_0\},
\end{equation}
but by tightness of the homogeneous spaces associated to these
elements we have  $$0 \neq \{I \cap {\mathcal P}_{\alpha},
{\mathcal P}_{h}\} \subset    \{I \cap {\mathcal P}_{\alpha},
\sum\limits_{{{ p,-p-g_0+h \in \Sigma \setminus \{\pm ng_0: n \in
2,3\}}}}\{{\mathcal P}_p, {\mathcal P}_{-p-g_0+h}\}  +$$
 $$\{I \cap {\mathcal
P}_{\alpha}, \sum\limits_{k,-k+h \in \Sigma \setminus \{\pm ng_0:
n \in 2,3\}} {\mathcal P}_k {\mathcal P}_{-k+h}\} .$$ From here
Jacobi identity and Leibniz identity give us that there exists
some $r \notin \{0, g_0, -g_0, -2g_0, -3g_0\}$ satisfying $ \{I
\cap {\mathcal P}_{\alpha}, {\mathcal P}_{r}\} \neq 0$ which
contradicts Equation (\ref{gym}).

In the second case $0 \neq (I \cap {\mathcal P}_{\alpha})
{\mathcal P}_{h} \subset {\mathcal P}_{\alpha+h} \cap ({\mathcal
P}_{0}+ {\mathcal P}_{g_0} + {\mathcal P}_{-g_0})$. This fact only
occurs for
$$h \in \{0,  g_0, -g_0, 2g_0, -2g_0\}.$$
A similar above argument with the tightness of the homogeneous
spaces associated to these elements, Leibniz identity and
associativity gives us a contradiction. Hence we conclude ${
I}=0$.
\end{proof}

 Let us introduce
the concepts of   maximal length and $\Sigma$-multiplicativity in
the setup of  Poisson color  algebras of degree $g_0$ in a similar
way than in the frameworks of graded Lie algebras, graded Lie
superalgebras, graded Leibniz algebras, split Poisson algebras,
split color Lie algebras etc. (see \cite{At1, Poissonyo, coloryo,
At2, Kochetov} for discussion and examples on these concepts).


\begin{definition}\label{Def30}\rm
We say that a  Poisson color algebra ${\mathcal P}$  of degree
$g_0$ is of {\it maximal length} if ${\mathcal P}_0 \neq 0$ and
$\dim {{\mathcal P}}_{g} =1$ for any $g\in \Sigma$.
 \end{definition}

\begin{definition}\rm
We say that a   Poisson color algebra ${\mathcal P}$  of degree
$g_0$  is {\it $\Sigma$-multiplicative} if given $g\in \Sigma $
and $ h \in \Sigma \cup \{0, \pm g_0\}$
 such that $ g + h +k\in \Sigma $ for some $k \in \{0, g_0,-g_0\}$ then
 $ {{\mathcal P}}_{g}  { {\mathcal P}}_{h}\neq 0$ if $k=0$,
 $ \{{{\mathcal P}}_{g} , { {\mathcal P}}_{h}\}\neq 0$ if $k=g_0$
 or $ ({\mathcal P}_{g}  { {\mathcal P}}_{h}) { {\mathcal P}}_{-g_0}\neq 0$ if $k=-g_0.$
  \end{definition}
We recall that $\Sigma$  is called {\it symmetric} if
  $g\in \Sigma$ implies $-g\in \Sigma.$ From now on we will
  suppose $\Sigma$ is symmetric.

  \medskip

   We would like to note that the above concepts appear in a natural
way in the study of any  Poisson system. For instance, any graded
Poisson structure associate to the  Cartan grading of a semisimple
finite dimensional Lie algebra gives rise to a
$\Sigma$-multiplicative   graded Poisson algebra with symmetric
support and of maximal length. We also have, in the
infinite-dimensional setting, that any graded Poisson structure
${\mathcal P}$ defined either on the split grading of a semisimple
separable $L^*$-algebra, \cite{Schue1, Schue2}, or on a semisimple
locally finite split Lie algebra, \cite{Stumme}, necessarily makes
$\mathcal{P}$ a  graded Poisson algebra with symmetric support,
$\Sigma$-multiplicative and  of maximal length. The Poisson
algebras considered in \cite[\S3]{Poissonyo} are also examples of
 graded Poisson algebras with symmetric support  of maximal length
and $\Sigma$-multiplicative.


\begin{lemma}\label{copa} Let ${\mathcal P}$ be  centerless,  $ \Sigma$-multiplicative,   of maximal length and  with ${\mathcal
P}_{\beta}$ tight for $\beta \in \{0, \pm g_0,\pm 2g_0, -3g_0\}$.
If any couple of elements in $\Sigma$ are connected, then any
nonzero ideal $I$ of ${\mathcal P}$ satisfies $I={\mathcal P}$.
\end{lemma}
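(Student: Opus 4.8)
The plan is to show that a nonzero ideal $I$ of $\mathcal P$ must contain a homogeneous component $\mathcal P_g$ with $g\in\Sigma$, and then to propagate this fact along connections to all of $\Sigma$, finally recovering the components $\mathcal P_0$, $\mathcal P_{g_0}$, $\mathcal P_{-g_0}$ so as to conclude $I=\mathcal P$. First I would write $I=(I\cap\mathcal P_0)\oplus(I\cap\mathcal P_{g_0})\oplus(I\cap\mathcal P_{-g_0})\oplus\bigoplus_{g\in\Sigma}(I\cap\mathcal P_g)$, which is legitimate since an ideal is a graded subspace. By Lemma~\ref{lema4}, $I$ cannot be contained in $\mathcal P_0+\mathcal P_{g_0}+\mathcal P_{-g_0}$, so there is some $g\in\Sigma$ with $0\neq I\cap\mathcal P_g$. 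Because $\mathcal P$ is of maximal length, $\dim\mathcal P_g=1$ and hence $\mathcal P_g\subset I$. This is the anchor.

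Next I would use $\Sigma$-multiplicativity together with the hypothesis that any two elements of $\Sigma$ are connected to spread $\mathcal P_g\subset I$ to $\mathcal P_h\subset I$ for every $h\in\Sigma$. Given $h\in\Sigma$, take a connection $\{g_1\otimes 0,g_2\otimes k_2,\dots,g_n\otimes k_n\}$ from $g$ to $h$ and argue by induction on the length $n$: at each step the partial sum $s_j=g_1+g_2+k_2+\cdots+g_j+k_j$ lies in $\pm\Sigma$, and I pass from $\mathcal P_{s_{j-1}}$ to $\mathcal P_{s_j}$ either through the associative product $\mathcal P_{s_{j-1}}\mathcal P_{g_j}$ (when $k_j=0$), or through the bracket $\{\mathcal P_{s_{j-1}},\mathcal P_{g_j}\}$ (when $k_j=g_0$), or through $(\mathcal P_{s_{j-1}}\mathcal P_{g_j})\mathcal P_{-g_0}$ (when $k_j=-g_0$); $\Sigma$-multiplicativity guarantees the relevant product is nonzero, maximal length forces it to be all of the one-dimensional space $\mathcal P_{s_j}$, and since $I$ is an ideal this space lies in $I$. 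One must also handle the bookkeeping that $g_j$ may lie in $-\Sigma\cup\{0,\pm g_0\}$ rather than $\Sigma$, and note that if $s_j\in-\Sigma$ we still get $\mathcal P_{\pm s_j}\subset I$ because $\Sigma$ is symmetric and of maximal length; when $g_j\in\{0,\pm g_0\}$ the corresponding factor is $\mathcal P_{g_j}$ and the same multiplicativity clause applies with $h\in\{0,\pm g_0\}$. The conclusion of this step is $\mathcal V_{[g]}=\bigoplus_{h\in\Sigma}\mathcal P_h\subset I$, since $[g]=\Sigma$.

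Finally I would recover the three remaining homogeneous spaces. Since every $\mathcal P_\alpha$ with $\alpha\in\{0,g_0,-g_0\}$ is tight, each such $\mathcal P_\alpha$ is a sum of brackets $\{\mathcal P_p,\mathcal P_{-p-g_0+\alpha}\}$ and products $\mathcal P_k\mathcal P_{-k+\alpha}$ with all indices in $\Sigma\setminus\{\pm 2g_0,\pm 3g_0\}\subset\Sigma$; as these brackets and products have at least one factor in $\mathcal V_{[g]}\subset I$ and $I$ is an ideal, we get $\mathcal P_\alpha\subset I$ for $\alpha\in\{0,g_0,-g_0\}$. Combining, $\mathcal P=\mathcal P_0+\mathcal P_{g_0}+\mathcal P_{-g_0}+\bigoplus_{h\in\Sigma}\mathcal P_h\subset I$, so $I=\mathcal P$. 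The main obstacle I anticipate is the second step: making the propagation along a connection rigorous requires carefully tracking signs $\epsilon\in\pm 1$ and the fact that partial sums may land in $-\Sigma$, and checking in each of the three cases $k_j\in\{0,g_0,-g_0\}$ that the product used is exactly the one whose nonvanishing $\Sigma$-multiplicativity asserts — in particular in the $k_j=-g_0$ case one must verify the two-step product $(\mathcal P_{s_{j-1}}\mathcal P_{g_j})\mathcal P_{-g_0}$ lands in $\mathcal P_{s_j}$ and uses that $\mathcal P_{-g_0}\subset I$ already (so staying inside $I$), or alternatively route through the bracket; the edge cases where an intermediate index equals $0$ or $\pm g_0$ also need the centerless hypothesis to rule out degeneracies.
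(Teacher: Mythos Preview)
Your overall architecture matches the paper's proof: use Lemma~\ref{lema4} to find some $g\in\Sigma$ with $\mathcal P_g\subset I$, propagate along a connection using $\Sigma$-multiplicativity and maximal length, then recover $\mathcal P_0+\mathcal P_{g_0}+\mathcal P_{-g_0}$ via tightness. However, there is a genuine gap in your second step, and the order of your last two steps is inverted relative to what actually works.

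The propagation along a connection from $g$ to $h$ ends at $s_n=\epsilon h$ with $\epsilon\in\{\pm1\}$, so what you obtain is only $\mathcal P_{\epsilon_h h}\subset I$ for each $h\in\Sigma$, not $\mathcal P_h$ for both signs. Your claim that ``if $s_j\in-\Sigma$ we still get $\mathcal P_{\pm s_j}\subset I$ because $\Sigma$ is symmetric and of maximal length'' is not correct: symmetry of $\Sigma$ tells you $-s_j\in\Sigma$, but it gives no mechanism for passing from $\mathcal P_{s_j}\subset I$ to $\mathcal P_{-s_j}\subset I$. Consequently your conclusion $\bigoplus_{h\in\Sigma}\mathcal P_h\subset I$ is unjustified at this stage, and your tightness argument in step~3 breaks down: in a summand such as $\{\mathcal P_p,\mathcal P_{-p-g_0}\}$ of $\mathcal P_0$, you might only know $\mathcal P_{-p}\subset I$ and $\mathcal P_{p+g_0}\subset I$, neither of which is a factor of that bracket.

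The paper repairs this as follows. From $\mathcal P_{-h}\subset I$ one first uses $\Sigma$-multiplicativity with the second argument in $\{0,\pm g_0\}$ (the triples $-h,g_0,0$; $-h,-g_0,0$; $-h,-g_0,-g_0$) to force $\mathcal P_{-h+g_0},\mathcal P_{-h-g_0},\mathcal P_{-h-2g_0}\subset I$ whenever these indices lie in $\Sigma$. This is exactly enough to guarantee that in every summand of the tight expression for $\mathcal P_\alpha$, $\alpha\in\{0,\pm g_0\}$, at least one factor lies in $I$, yielding $\mathcal P_0+\mathcal P_{g_0}+\mathcal P_{-g_0}\subset I$. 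Only \emph{after} this does the paper recover all of $\bigoplus_{h\in\Sigma}\mathcal P_h$: by $\Sigma$-multiplicativity with $k=0$ one has $\mathcal P_h\mathcal P_0\neq 0$, hence $\mathcal P_h=\mathcal P_h\mathcal P_0\subset I$ for every $h\in\Sigma$. So the correct order is: partial $\Sigma$-coverage $\Rightarrow$ $\mathcal P_0\subset I$ $\Rightarrow$ full $\Sigma$-coverage.

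A minor point: your concern that the $k_j=-g_0$ case requires $\mathcal P_{-g_0}\subset I$ beforehand is unfounded. Since $I$ is an ideal and $\mathcal P_{s_{j-1}}\subset I$ by induction, the product $(\mathcal P_{s_{j-1}}\mathcal P_{g_j})\mathcal P_{-g_0}$ automatically lies in $I$; $\Sigma$-multiplicativity is used only to ensure it is nonzero.
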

\begin{proof}
Consider $I$  a nonzero ideal of ${\mathcal P}$ and write  $I=
(I\cap {\mathcal P}_0 )+(I\cap {\mathcal P}_{g_0} )+(I\cap
{\mathcal P}_{-g_0} )\oplus (\bigoplus\limits_{g \in \Sigma_I} (I
\cap {\mathcal P}_{g}))$ where $\Sigma_{I}:=\{g \in \Sigma: I \cap
{\mathcal P}_{g} \neq 0\}$. By  the maximal length of ${\mathcal
P}$
 we can write
$$  I= (I\cap {\mathcal P}_0 )+(I\cap {\mathcal P}_{g_0} ) +(I\cap
{\mathcal P}_{-g_0} ) \oplus (\bigoplus\limits_{g \in \Sigma_I}
{\mathcal P}_{g}),$$ being  $\Sigma_I \neq \emptyset$ as
consequence of Lemma \ref{lema4}. From here,   we can
 take $g \in \Sigma_I$ being so
\begin{equation}\label{I}
 0 \neq {\mathcal P}_{g}
\subset I.
\end{equation}
 For any  $h \in \Sigma$, $h \neq \pm g$,  the fact that
 ${g}$ and $h$ are connected
 allows  us to fix a
connection $$\{g_1\otimes 0,g_2 \otimes k_2,....,g_n \otimes
k_n\}$$ from ${g}$ to $h$. Consider $g_1=g$, $g_2$ and $ g_1 +
g_2+k_2$.
 By $\Sigma$-multiplicativity and  maximal length
of ${\mathcal P}$ we obtain either $0\neq {\mathcal P}_{g_1}
 {\mathcal P}_{g_2} ={\mathcal
P}_{g_1+g_2}$ if $k_2=0$ or $0\neq \{ {\mathcal P}_{g_1} ,
 {\mathcal P}_{g_2} \}={\mathcal
P}_{g_1+g_2+g_0}$ if $k_2=g_0$ or $0\neq ({\mathcal P}_{g_1}
 {\mathcal P}_{g_2}){\mathcal P}_{-g_0} ={\mathcal
P}_{g_1+g_2-g_0}$ if $k_2=-g_0$. From here, Equation (\ref{I})
gives us  that in any case $$0\neq {\mathcal P}_{g_1+g_2+k_2}
\subset I.$$
  We
can argue in a similar way from  $ g_1+g_2+k_2$, $g_3$ and $
g_1+g_2+k_2+g_3+k_3$ to get
$$0\neq {\mathcal P}_{g_1+g_2+k_2+g_3+k_3} \subset I.$$
 Following this process with the
connection $\{g_1\otimes 0,g_2 \otimes k_2,....,g_n \otimes k_n\}$
we obtain that
$$0\neq {\mathcal P}_{g_1+g_2+k_2+g_3+k_3+ \cdots +g_n+k_n} \subset I$$ and so either $ {\mathcal P}_{ h}
\subset I$ or  $ {\mathcal P}_{ -h} \subset I$.  That is, $0\neq
{\mathcal P}_{ \epsilon_h h} \subset I$  for any $h \in \Sigma$
and some $\epsilon_h \in \{\pm 1\}$.

  Now, observe that we have showed that in case $h \notin \Sigma_I$ for some $h
  \in \Sigma$, then $-h \in \Sigma_I$. From here, if $-h+g_0 \in
  \Sigma$, (resp. $-h-g_0 \in
  \Sigma$, $-h-2g_0 \in
  \Sigma$), then by considering the set $-h, g_0, 0$, (resp.  $-h, -g_0,
  0$; $-h, -g_0,
  -g_0$), the $\Sigma$-multiplicativity and  maximal length
of ${\mathcal P}$ give us now $ {\mathcal P}_{ -h+g_0} \subset I$,
(resp. $ {\mathcal P}_{- h-g_0} \subset I$, $ {\mathcal P}_{
-h-2g_0} \subset I$). Hence,  the fact  ${\mathcal P}_{\alpha}$ is
tight for any $\alpha \in \{0,g_0,-g_0\}$
  allows us to assert
${\mathcal P}_{0}+{\mathcal P}_{g_0} + {\mathcal P}_{-g_0}\subset
I. $

Finally, the $\Sigma$-multiplicativity and  maximal length of
${\mathcal P}$ together with the fact $ {\mathcal P}_{0}\subset I
$ allow us to assert that ${\mathcal P}_{h}={\mathcal
P}_{h}{\mathcal P}_{0} \subset I$ for any $h \in \Sigma$. Since
$${\mathcal P}={\mathcal P}_0 + {\mathcal P}_{g_0} + {\mathcal P}_{-g_0} \oplus
(\bigoplus\limits_{h \in \Sigma} {\mathcal P}_{ h}) \subset I$$
the proof is completed.
\end{proof}

As consequence of Theorem \ref{teo1}-2 and Lemma \ref{copa} we can
assert the next result.

\begin{theorem}\label{teo3} Let ${\mathcal P}$ be  centerless,  $ \Sigma$-multiplicative,   of maximal length and  with ${\mathcal
P}_{\beta}$ tight for $\beta \in \{0, \pm g_0,\pm 2g_0, -3g_0\}$.
Then ${\mathcal P}$ is simple if and only if it has any couple of
elements in $\Sigma$  connected.
\end{theorem}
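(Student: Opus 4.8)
The plan is to deduce both implications from results already established in the paper. For the forward direction, suppose $\mathcal{P}$ is simple. Then $\{\mathcal{P},\mathcal{P}\}\neq 0$, $\mathcal{P}\mathcal{P}\neq 0$, and the only ideals are $\{0\}$ and $\mathcal{P}$. Since each $\mathcal{P}_\beta$ is tight for $\beta\in\{0,\pm g_0,\pm 2g_0,-3g_0\}$, in particular for $\beta\in\{0,g_0,-g_0\}$, Theorem \ref{teo1}-1 tells us that for each $g\in\Sigma$ the subalgebra $\mathcal{I}_{[g]}$ is an ideal of $\mathcal{P}$. By Lemma \ref{lema4} a nonzero ideal cannot be contained in $\mathcal{P}_0+\mathcal{P}_{g_0}+\mathcal{P}_{-g_0}$, so if $\Sigma\neq\emptyset$ then $\mathcal{I}_{[g]}\neq\{0\}$ (it contains $\mathcal{V}_{[g]}\supset\mathcal{P}_g\neq 0$), hence by simplicity $\mathcal{I}_{[g]}=\mathcal{P}$, which forces $[g]=\Sigma$; thus any two elements of $\Sigma$ are connected. (If $\Sigma=\emptyset$ the conclusion is vacuous.) This is exactly the statement of Theorem \ref{teo1}-2, so really I would just cite it.

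For the converse, suppose any two elements of $\Sigma$ are connected. I must show $\mathcal{P}$ is simple, i.e. verify the three conditions in the definition. Lemma \ref{copa} applies verbatim under the present hypotheses (centerless, $\Sigma$-multiplicative, maximal length, $\mathcal{P}_\beta$ tight for $\beta\in\{0,\pm g_0,\pm 2g_0,-3g_0\}$, and all elements of $\Sigma$ connected), and it gives $I=\mathcal{P}$ for every nonzero ideal $I$. So the only ideals are $\{0\}$ and $\mathcal{P}$. It remains to check $\{\mathcal{P},\mathcal{P}\}\neq 0$ and $\mathcal{P}\mathcal{P}\neq 0$. If $\Sigma\neq\emptyset$, pick $g\in\Sigma$; since $\Sigma$ is symmetric, $-g\in\Sigma$ and $g+(-g)=0\in\{0,g_0,-g_0\}$, so $\Sigma$-multiplicativity (applied with $k=0$) yields $\mathcal{P}_g\mathcal{P}_{-g}\neq 0$, hence $\mathcal{P}\mathcal{P}\neq 0$; similarly $g+(-g)+g_0=g_0$ gives $\{\mathcal{P}_g,\mathcal{P}_{-g}\}\neq 0$, so $\{\mathcal{P},\mathcal{P}\}\neq 0$. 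If $\Sigma=\emptyset$ then $\mathcal{P}=\mathcal{P}_0+\mathcal{P}_{g_0}+\mathcal{P}_{-g_0}$ with $\mathcal{P}_0\neq 0$ (maximal length); tightness of $\mathcal{P}_0$ expresses $\mathcal{P}_0$ as a sum of brackets and products of homogeneous spaces indexed by $\Sigma\setminus\{\pm 2g_0,\pm 3g_0\}=\emptyset$, so $\mathcal{P}_0=0$, a contradiction — hence this degenerate case does not arise, and $\mathcal{P}$ is simple.

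The main obstacle is not conceptual but bookkeeping: one must be sure that the hypotheses of Theorem \ref{teo3} imply exactly those of Lemma \ref{copa} and Theorem \ref{teo1}-2 (they do, since $\{0,g_0,-g_0\}\subset\{0,\pm g_0,\pm 2g_0,-3g_0\}$), and one must handle the edge case $\Sigma=\emptyset$ so that the non-degeneracy conditions $\{\mathcal{P},\mathcal{P}\}\neq 0$ and $\mathcal{P}\mathcal{P}\neq 0$ in the definition of simplicity are genuinely verified rather than assumed. Once those are pinned down, the argument is a two-line invocation of the earlier results in each direction.
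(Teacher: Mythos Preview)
Your overall strategy coincides with the paper's: the forward implication is exactly Theorem~\ref{teo1}-2, and the converse is exactly Lemma~\ref{copa}. The paper's entire proof of Theorem~\ref{teo3} is the one sentence ``As consequence of Theorem~\ref{teo1}-2 and Lemma~\ref{copa} we can assert the next result,'' so on the structural level your proposal is identical.

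Where you go further than the paper --- trying to verify the non-degeneracy conditions $\{\mathcal{P},\mathcal{P}\}\neq 0$ and $\mathcal{P}\mathcal{P}\neq 0$ required by the definition of simplicity --- your argument contains a slip. $\Sigma$-multiplicativity, as defined, requires $g+h+k\in\Sigma$, not $g+h+k\in\{0,g_0,-g_0\}$; since $0\notin\Sigma$ and $g_0\notin\Sigma$, the choices $g+(-g)+0=0$ and $g+(-g)+g_0=g_0$ do not trigger the definition, so you cannot conclude $\mathcal{P}_g\mathcal{P}_{-g}\neq 0$ or $\{\mathcal{P}_g,\mathcal{P}_{-g}\}\neq 0$ that way. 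A correct argument for $\mathcal{P}\mathcal{P}\neq 0$ is to take any $g\in\Sigma$ and $h=0$, $k=0$: then $g+0+0=g\in\Sigma$ and $\Sigma$-multiplicativity gives $\mathcal{P}_g\mathcal{P}_0\neq 0$ (this is exactly how the paper uses it at the end of the proof of Lemma~\ref{copa}). For $\{\mathcal{P},\mathcal{P}\}\neq 0$ one needs a slightly different move (e.g.\ find $g\in\Sigma$ and $h\in\Sigma\cup\{0,\pm g_0\}$ with $g+h+g_0\in\Sigma$, or exploit tightness of some $\mathcal{P}_\beta$); the paper itself simply does not check these conditions. Your handling of the degenerate case $\Sigma=\emptyset$ via tightness of $\mathcal{P}_0$ is fine.
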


\begin{theorem}
Let ${\mathcal P}$ be  centerless,  $ \Sigma$-multiplicative,   of
maximal length and  with ${\mathcal P}_{\beta}$ tight for $\beta
\in \{0, \pm g_0,\pm 2g_0, -3g_0\}$. Then ${\mathcal P}$ is the
direct sum of the family of its minimal ideals, each one being a
simple  Poisson color algebra of degree $g_0$ having all of the
elements in its restricted support  connected.
\end{theorem}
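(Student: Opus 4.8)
The plan is to combine the sum-of-ideals decomposition from Theorem~\ref{teo2} with the simplicity criterion of Theorem~\ref{teo3}, reinforced by the centerless hypothesis through Corollary~\ref{co1}. First I would invoke Corollary~\ref{co1}: since $\mathcal{P}$ is centerless and every $\mathcal{P}_\alpha$ with $\alpha\in\{0,g_0,-g_0\}$ is tight, we obtain the \emph{direct} sum decomposition
\[
\mathcal{P}=\bigoplus_{[g]\in\Sigma/\sim}\mathcal{I}_{[g]},
\]
where each $\mathcal{I}_{[g]}$ is an ideal of $\mathcal{P}$ and $\{\mathcal{I}_{[g]},\mathcal{I}_{[h]}\}+\mathcal{I}_{[g]}\mathcal{I}_{[h]}=0$ whenever $[g]\neq[h]$ (Theorem~\ref{teo1} and Proposition~\ref{lema1}).

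Next I would check that each summand $\mathcal{I}_{[g]}$ inherits the four hypotheses needed to apply Theorem~\ref{teo3} to it as an algebra in its own right. Being $\Sigma$-multiplicative and of maximal length is clear, since the restricted support of $\mathcal{I}_{[g]}$ is exactly the equivalence class $[g]\subset\Sigma$ and the homogeneous components are those of $\mathcal{P}$ restricted to $[g]$ together with pieces of $\mathcal{P}_0,\mathcal{P}_{\pm g_0}$; connectedness of elements and nonvanishing of the relevant products are unaffected by passing to an ideal that contains them. The tightness of the components $\mathcal{I}_{[g]}\cap\mathcal{P}_\beta$ for $\beta\in\{0,\pm g_0,\pm 2g_0,-3g_0\}$ follows from the construction of $\mathcal{P}_{\alpha,[g]}$, which is precisely the "$[g]$-part" of the tight expression for $\mathcal{P}_\alpha$. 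The one point requiring a short argument is that $\mathcal{I}_{[g]}$ is centerless: if $v$ were a central element of $\mathcal{I}_{[g]}$, then using $\{\mathcal{I}_{[g]},\mathcal{I}_{[h]}\}+\mathcal{I}_{[g]}\mathcal{I}_{[h]}=0$ for $[h]\neq[g]$ one sees that $v$ annihilates all of $\mathcal{P}$, so $v=0$ by the centerlessness of $\mathcal{P}$. Since, by construction, every element of $[g]$ is connected to $g$ within $\Sigma$ (and hence within the restricted support of $\mathcal{I}_{[g]}$), Theorem~\ref{teo3} yields that each $\mathcal{I}_{[g]}$ is a simple Poisson color algebra of degree $g_0$ with all elements of its restricted support connected.

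Finally I would verify that these simple ideals are exactly the minimal ideals of $\mathcal{P}$. Any nonzero ideal $J$ of $\mathcal{P}$ contained in some $\mathcal{I}_{[g]}$ is also an ideal of $\mathcal{I}_{[g]}$ (because $\mathcal{I}_{[g]}$ absorbs $\mathcal{P}$-multiplication and the cross terms vanish), so by simplicity $J=\mathcal{I}_{[g]}$; conversely a general nonzero ideal $J$ of $\mathcal{P}$ has nonzero projection onto some $\mathcal{I}_{[g]}$, and intersecting with that summand (using again the vanishing of cross products to show $J\cap\mathcal{I}_{[g]}$ is a nonzero ideal) gives $\mathcal{I}_{[g]}\subset J$, so each $\mathcal{I}_{[g]}$ is minimal and every minimal ideal equals one of them. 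Assembling these facts: $\mathcal{P}=\bigoplus_{[g]}\mathcal{I}_{[g]}$ is the direct sum of its minimal ideals, each a simple Poisson color algebra of degree $g_0$ with connected restricted support.

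I expect the main obstacle to be the bookkeeping needed to confirm that $\mathcal{I}_{[g]}$, viewed intrinsically, genuinely satisfies tightness for all the required $\beta$ and that its "restricted support" in the sense of $\S2$ coincides with $[g]$ — in particular handling the exceptional degrees $\pm 2g_0,-3g_0$ that are excluded from $\Sigma$ but whose components may lie inside $\mathcal{I}_{[g]}$, and making sure Lemma~\ref{tight} applies verbatim inside $\mathcal{I}_{[g]}$. Everything else is a straightforward transfer of the already-established structural lemmas.
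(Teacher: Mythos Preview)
Your proposal is correct and follows essentially the same approach as the paper: invoke Corollary~\ref{co1} for the direct sum decomposition, verify that each $\mathcal{I}_{[g]}$ inherits centerlessness, $\Sigma$-multiplicativity, maximal length, tightness at the required degrees, and connectedness of its restricted support, then apply Theorem~\ref{teo3}. You in fact go slightly beyond the paper by explicitly arguing that the $\mathcal{I}_{[g]}$ are precisely the minimal ideals of $\mathcal{P}$, whereas the paper simply asserts that the decomposition ``satisfies the assertions of the theorem''; and your anticipated obstacle---the tightness bookkeeping at $\pm 2g_0,-3g_0$---is exactly the case the paper singles out, handling it via Remark~\ref{r1} and the observation that such a $\beta$, if in $\Sigma$, lies in a unique class $[k]$.
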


\begin{proof}
By Corollary \ref{co1} we have that ${\mathcal P}
=\bigoplus\limits_{[g] \in \Sigma/\sim} {\mathcal I}_{[g]}$ is the
direct sum of the ideals $ {\mathcal I}_{[g]}.$
We wish to apply Theorem \ref{teo3} to any ${\mathcal I}_{[g]}$,
so we have to verify that ${\mathcal I}_{[g]}$ is a centerless $
\Sigma$-multiplicative   Poisson color algebra of degree $g_0$ with
maximal length, with  $({{\mathcal I}_{[g]}})_{\beta}$
   tight for $\beta \in  \{0, \pm g_0,\pm 2g_0, -3g_0\}$ and with all of the elements in its restricted support
   connected.

Since   $({{\mathcal I}_{[g]}})_{\beta}={\mathcal
P}_{{\beta},[g]}$ in case  $\beta \in  \{0, \pm g_0\}$ and
${\mathcal P}_{{\beta}}$ is tight,  we clearly have $({{\mathcal
I}_{[g]}})_{\beta}$ is tight for $\beta \in \{0, \pm g_0\}$. In
case  $\beta \in  \{\pm 2g_0, -3g_0\} \setminus \{0, \pm g_0\}$
with $\beta \in \Sigma$, then $\beta \in [k]$ for a unique $[k]
\in \Sigma/\sim$ and so $({{\mathcal I}_{[g]}})_{\beta}=0$ if $[g]
\neq [k]$ and $({{\mathcal I}_{[k]}})_{\beta}= \sum\limits_{h\in
[k] \setminus \{\pm ng_0: n \in 2,3\}}(\{{\mathcal P}_h, {\mathcal
P}_{-h-g_0+\beta}\} + {\mathcal P}_h {\mathcal P}_{-h+\beta})$.
From here, taking into account Remark \ref{r1},  $({{\mathcal
I}_{[g]}})_{\beta}$ is tight in any case.

We also have ${\mathcal I}_{[g]}$ is $ \Sigma$-multiplicative as
consequence of the $ \Sigma$-multiplicativity of ${\mathcal P}$
and clearly ${\mathcal I}_{[g]}$ is of maximal length. Also
observe that ${\mathcal I}_{[g]}$ is centerless  as consequence of
the fact $\{{\mathcal I}_{[g]} , {\mathcal I}_{[h]}\}+{\mathcal
I}_{[g]}  {\mathcal I}_{[h]}=0$ if $[g] \neq [h]$, (Theorem
\ref{teo2}), and that $ {\mathcal P}$ is centerless. Finally,
since the restricted support of ${\mathcal I}_{[g]}$
   is $[g]$,  it  is
easy to verify that  $[g]$  has all of its elements
$[g]$-connected, (connected through elements contained in  $[g]
\cup \{0, \pm g_0\}$).
 From the above, we can apply
Theorem \ref{teo3} to any ${\mathcal I}_{[g]}$ so as to conclude
${\mathcal I}_{[g]}$ is simple. It is clear  that the
decomposition
 ${\mathcal P} =\bigoplus\limits_{[g] \in
\Sigma/\sim} {\mathcal I}_{[g]}$
 satisfies the assertions of the theorem.
\end{proof}


\

{\sc Antonio J. Calder\'{o}n Martin.}  Department of Mathematics.
Univesity of  C\'{a}diz. 11510 Puerto Real, C\'{a}diz. (Spain),
e-mail: \emph{ajesus.calderon@uca.es}

\

{\sc D. Mame Cheikh.} Department of Mathematics. University of
Dakar.  e-mail: \emph{dioufmamecheikh@gmail.com}

\end{document}